\title{On the Generative Capacity of Contextual Grammars with Strictly Locally Testable Selection Languages}
\author{J\"urgen Dassow
\institute{Fakult\"at f\"ur Informatik, Otto-von-Guericke-Universit\"at Magdeburg, Postfach 4120, 39106 Magdeburg, Germany}
\email{dassow@iws.cs.uni-magdeburg.de} \and
Bianca Truthe
\institute{Institut f\"ur Informatik, Universit\"at Giessen, Arndtstr.~2, 35392 Giessen, Germany}
\email{bianca.truthe@informatik.uni-giessen.de}}
\def\cF{{\cal F}}
\def\cP{{\cal P}}
\def\cEC{{\cal EC}}
\def\cIC{{\cal IC}}
\newcommand{\CIRC}{\mathit{CIRC}}
\newcommand{\COMB}{\mathit{COMB}}
\newcommand{\COMM}{\mathit{COMM}}
\newcommand{\DEF}{\mathit{DEF}}
\newcommand{\FIN}{\mathit{FIN}}
\newcommand{\MON}{\mathit{MON}}
\newcommand{\NC}{\mathit{NC}}
\newcommand{\NIL}{\mathit{NIL}}
\newcommand{\ORD}{\mathit{ORD}}
\newcommand{\PS}{\mathit{PS}}
\newcommand{\REG}{\mathit{REG}}
\newcommand{\SUF}{\mathit{SUF}}
\newcommand{\UF}{\mathit{UF}}
\newcommand{\SLT}{\mathit{SLT}}
\newcommand{\xLra}[2][]{\xLongrightarrow[#1]{#2}}
\newcommand{\lab}[1]{\label{#1}}
\def\Set#1#2{\left\{\: #1\;|\; #2\:\right\}}
\def\set#1#2{\{\: #1\;|\; #2\:\}}
\def\Sets#1{\left\{#1\right\}}
\def\sets#1{\{#1\}}
\def\Setr#1#2{\left\{\: #1\;\left|\; #2\right.\:\right\}}
\def\qmand{\quad \mbox{and} \quad}
\def\slt#1#2#3#4{\textbf{[}#1,#2,#3,#4\textbf{]}}
\tikzstyle{to}=[->, >=stealth]
\tikzstyle{hier}=[->, >=angle 60]
\tikzstyle{hiero}=[->, >=angle 60, dashed]
\tikzstyle{state}=[circle,draw,inner sep=2pt,minimum size=8mm]
\tikzstyle{edgeLabel}=[inner sep=0.5mm,fill=white,text=black]
\newtheorem{theorem}{Theorem}[section]
\newtheorem{lemma}[theorem]{Lemma}
\newtheorem{conjecture}[theorem]{Conjecture}
\newenvironment{proof}{{\em Proof. }}{{}\hspace*{\fill}$\Box$ \par \medskip }
\newenvironment{proof*}{{\em Proof. }}{\par \medskip }
\newlength{\btlabelwidth}\setlength{\btlabelwidth}{\labelwidth}
\newlength{\btleftmargin}\setlength{\btleftmargin}{\leftmargin}
\newenvironment{btlists}{\begin{list}{{\rm--}}{%
\setlength{\labelwidth}{\btlabelwidth}\setlength{\leftmargin}{\btleftmargin}%
\setlength{\topsep}{0.7pt plus0.2ex}%
\setlength{\itemsep}{0.7ex plus0.2ex}%
\setlength{\parsep}{0.7pt plus0.2ex}}}{\end{list}}
\newcounter{btlistrmc}
\newcounter{btlistklac}
\begin{document}
\maketitle

\begin{abstract}
We continue the research on the generative capacity of contextual grammars where contexts are adjoined around whole
words (externally) or around subwords (internally) which belong to special regular selection languages. All languages
generated by contextual grammars where all selection languages are elements of a certain subregular language family form
again a language family. We investigate contextual grammars with strictly locally testable selection languages and
compare those families to families which are based on finite, monoidal, nilpotent, combinational, definite,
suffix-closed, ordered, commutative, circular, non-counting, power-separating, or union-free languages.
\end{abstract}

\section{Introduction}

Contextual grammars were introduced by Solomon~Marcus in \cite{Mar69} as a formal model that might be used
for the generation of natural languages.
The derivation steps consist of adjoining contexts to given sentences starting from a finite set. A context
is given by a pair~$(u,v)$ of words. The external adjoining to a word~$x$ gives the word $uxv$ and the
internal adjoining gives all words $x_1ux_2vx_3$ with $x_1x_2x_3=x$. Following the linguistic motivation, conditions are given
for each context which have to be met by the word in order to allow the context to be adjoined. Contextual grammars
where the contexts are adjoined ex- or internally are called external or internal contextual grammars, respectively.
If conditions are given to the subword where a context is to be adjoined, we speak about
external or internal contextual grammars with selection.
Contextual grammars with ex- or internal derivation and selection in a certain family $F$ of languages were defined where
it is required that
the word where a context is wrapped around belongs to a language of the family $F$. Contextual grammars have been studied where
the family $F$ is taken from the Chomsky hierarchy (see \cite{Ist78,Pau98,handbook} and references therein).

The study of external contextual grammars with selection in special regular sets was started by J{\"u}rgen~Dassow in \cite{Das05}.
The research was continued by J{\"u}rgen~Dassow, Florin Manea, and Bianca Truthe
(see \cite{DasManTru11b,DasManTru12a, DasManTru12b,ManTru12})
where further subregular families of selection languages were considered and the effect of subregular selection languages on the
generative power of external and internal contextual grammars was investigated. A recent survey can be found in \cite{Tru21-fi}.
The internal case is different from the case of external contextual grammars, as there are two main differences between the ways
in which words are derived. In the case of internal contextual grammars, it is possible that the insertion of a context into a
sentential form can be done at more than one place, such that the derivation becomes in some sense non-deterministic; in the
case of external grammars, once a context was selected, there is at most one way to insert it: wrapped around the sentential form,
when this word is in the selection language of the context. If a context can be added internally, then it can be added
arbitrarily often (because the subword where the context
is wrapped around does not change) which does not necessarily hold for external grammars.

In the present paper, we investigate the impact of strictly locally testable selection languages in contextual grammars on the
generative capacity and compare it to those of the families which are based on finite, monoidal, nilpotent, combinational, definite,
suffix-closed, ordered, commutative, circular, non-counting, power-separating, or union-free languages.
External contextual grammars with such selection languages have been investigated in \cite{Das-Analele15}. We give here
some further results. Internal contextual grammars with strictly locally testable selection languages have not been
investigated so far. So, this paper gives first results in this area. In the end, we mention some open problems.

\section{Preliminaries}

After giving some notations used in this paper, we first recall the subregular families of languages under investigation
and then recall the contextual grammars with external or internal language generating modes.

We assume that the reader is familiar with the basic concepts of the theory of automata and formal languages. For details,
we refer to~\cite{handbook}.

Given an alphabet $V$, we denote by $V^*$ and $V^+$ the set of all words and the set of all non-empty words over $V$,
respectively. The empty word is denoted by $\lambda$. By $V^k$ and $V^{{}\leq k}$ for some natural number $k$,
we denote the set of all words of the alphabet $V$ with exactly $k$ letters and the set of all words over $V$ with at most $k$
letters, respectively.
For a word $w$, we denote the length of $w$ by $|w|$.

\subsection{Subregular Language Families}

\subsubsection{Definitions of Subregular Language Families}

We consider the following restrictions for regular languages. Let $L$ be a language
over an alphabet $V$. We say that the language $L$ -- with respect to the alphabet $V$ -- is

\begin{itemize}
\item \emph{monoidal} if and only if $L=V^*$,
\item \emph{nilpotent} if and only if it is finite or its complement $V^*\setminus L$ is finite,
\item \emph{combinational} if and only if it has the form
$L=V^*X$
for some subset $X\subseteq V$,
\item \emph{definite} if and only if it can be represented in the form
$L=A\cup V^*B$
where $A$ and $B$ are finite subsets of $V^*$,
\item \emph{suffix-closed} (or \emph{fully initial} or \emph{multiple-entry} language) if
and only if, for any two words~$x\in V^*$ and~$y\in V^*$, the relation $xy\in L$ implies
the relation $y\in L$,
\item \emph{ordered} if and only if the language is accepted by some deterministic finite
automaton
\[{\cal A}=(V,Z,z_0,F,\delta)\]
with an input alphabet $V$, a finite set $Z$ of states, a start state $z_0\in Z$, a set $F\subseteq Z$ of
accepting states and a transition mapping $\delta$ where $(Z,\preceq )$ is a totally ordered set and, for
any input symbol $a\in V$, the relation $z\preceq z'$ implies $\delta (z,a)\preceq \delta (z',a)$,
\item \emph{commutative} if and only if it contains with each word also all permutations of this
word,
\item \emph{circular} if and only if it contains with each word also all circular shifts of this
word,
\item \emph{non-counting} (or \emph{star-free}) if and only if there is a natural
number $k\geq 1$ such that, for any three words $x\in V^*$, $y\in V^*$, and $z\in V^*$, it holds
$xy^kz\in L \mbox{ if and only if } xy^{k+1}z\in L$,
\item \emph{power-separating} if and only if, there is a natural number $m\geq 1$ such that
for any word~$x\in V^*$, either
$J_x^m \cap L = \emptyset$
or
$J_x^m\subseteq L$
where
$J_x^m = \set{ x^n}{n\geq m}$,
\item \emph{union-free} if and only if $L$ can be described by a regular expression which
is only built by product and star,
\item \emph{strictly locally $k$-testable} if and only if there are three subsets $B$, $I$, and $E$ of $V^k$
such that any word $a_1a_2\ldots a_n$ with $n\geq k$ and $a_i\in V$ for $1\leq i\leq n$ belongs to the language $L$
if and only if~$a_1a_2\ldots a_k\in B$, $a_{j+1}a_{j+2}\ldots a_{j+k}\in I$ for $1\leq j\leq n-k-1$,
and $a_{n-k+1}a_{n-k+2}\ldots a_n\in E$,
\item \emph{strictly locally testable} if and only if it is strictly locally $k$-testable for some natural number $k$.
\end{itemize}

We remark that monoidal, nilpotent, combinational, definite, ordered, union-free, and strictly locally~\hbox{($k$-)testable}
languages are regular, whereas non-regular languages of the other types mentioned above exist.
Here, we consider among the commutative, circular, suffix-closed, non-counting,
and power-separating languages only those which are also regular.

Some properties of the languages of the classes mentioned above can be found in
\cite{Shyr91} (monoids),
\cite{GecsegPeak72} (nilpotent languages),
\cite{Ha69} (combinational and commutative languages),
\cite{PerRabSham63} (definite languages),
\cite{GilKou74} and \cite{BrzoJirZou14} (suffix-closed languages),
\cite{ShyThi74-ORD} (ordered languages),
\cite{Das79} (circular languages),
\cite{McNPap71} (non-counting and strictly locally testable languages),
\cite{ShyThi74-PS} (power-separating languages),
\cite{Brzo62} (union-free languages).

By $\FIN$, $\MON$, $\NIL$, $\COMB$, $\DEF$, $\SUF$, $\ORD$, $\COMM$, $\CIRC$, $\NC$, $\PS$, $\UF$, $\SLT_k$ (for
any natural number $k\geq 1$), $\SLT$,
and $\REG$, we denote the families of all finite, monoidal, nilpotent, combinational, definite, regular suffix-closed,
ordered, regular commutative, regular circular, regular non-counting, regular power-separating, union-free,
strictly locally $k$-testable, strictly locally testable, and regular languages, respectively.

A strictly locally testable language characterized by three finite sets $B$, $I$, and $E$ as above which includes
additionally a finite set $F$ of words which are shorter than those of the sets $B$, $I$, and $E$
is denoted by $\slt{B}{I}{E}{F}$.


As the set of all families under consideration, we set
\begin{align*}
\cF &= \sets{ \FIN, \MON, \NIL, \COMB, \DEF, \SUF, \ORD, \COMM, \CIRC, \NC, \PS, \UF, \SLT}
    \cup\set{\SLT_k}{k\geq 1}.
\end{align*}

\subsubsection{Hierarchy of Subregular Language Families}
Many inclusion relations and incomparabilities between these families have been proved in the past, see~\cite{Tru21-fi} 
for a survey. We now insert the families of the strictly locally ($k$-)testable languages into the
existing hierarchy.

The families of strictly locally $k$-testable languages form an
infinite hierarchy of proper inclusions. This is shown in
\cite{SCR-PSP-11} with the witness languages
\[L_h=\{ab^h\}^+\in\SLT_{h+1}\setminus\SLT_h \mbox{ for } h\geq 1.\]
From \cite{McNPap71}, we know the proper inclusion $\SLT\subset\NC$. In
\cite{Das-Analele15}, the proper inclusions $\COMB\subset\SLT_1$
and~$\DEF\subset\SLT$ as well as the incomparability of each
family $\SLT_k$ for $k\geq 1$ with the families $\FIN$, $\NIL$, and~$\DEF$ 
were mentioned but not proved. This will be done in the
sequel. We first give a witness language which will be useful in
all these proofs.\pagebreak

\begin{lemma}\label{l-abna}
Let $L_{\SLT_1,\neg\DEF}=\{a\}\cup\set{ab^na}{n\geq 0}$. Then
$L_{\SLT_1,\neg\DEF}\in\SLT_1\setminus\DEF$.
\end{lemma}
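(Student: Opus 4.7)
The plan is to verify the two assertions separately and directly; neither requires heavy machinery.

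For the inclusion $L_{\SLT_1,\neg\DEF}\in\SLT_1$, I would guess the parameters from the shape of the language and then check them against the definition. Over $V=\{a,b\}$, set $B=\{a\}$, $I=\{b\}$, $E=\{a\}$, and $F=\emptyset$, so that $L_{\SLT_1,\neg\DEF}$ is a candidate value for $\slt{\{a\}}{\{b\}}{\{a\}}{\emptyset}$. For a word $a_1\cdots a_n$ with $n\geq 1$, the $\SLT_1$-condition becomes: $a_1=a$, every interior letter $a_j$ equals $b$ (a vacuous demand when $n\leq 2$), and $a_n=a$. The words satisfying these conditions are precisely $a$ (if $n=1$), $aa=ab^0a$ (if $n=2$), and $ab^{n-2}a$ (if $n\geq 3$); together they form $\{a\}\cup\{ab^na:n\geq 0\}=L_{\SLT_1,\neg\DEF}$.

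For the non-membership $L_{\SLT_1,\neg\DEF}\notin\DEF$, I would argue by contradiction. Assume $L_{\SLT_1,\neg\DEF}=A\cup V^*B$ with $A,B\subseteq V^*$ finite, and let $m$ be the maximal length of a word in $A\cup B$. Pick $N>m$ and consider the word $w=ab^Na\in L_{\SLT_1,\neg\DEF}$. Since $|w|=N+2>m$, it cannot lie in $A$, so $w=uv$ for some $v\in B$. Because $|v|\leq m<N$, the suffix $v$ lies entirely within the $b^Na$-tail and does not touch the leading $a$. Now the word $bv$ belongs to $V^*B$, hence to $L_{\SLT_1,\neg\DEF}$; but $bv$ begins with $b$, while every non-empty word of $L_{\SLT_1,\neg\DEF}$ begins with $a$, a contradiction. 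The corner cases are handled uniformly: $v=\lambda$ yields $bv=b\notin L_{\SLT_1,\neg\DEF}$, $v=a$ yields $bv=ba\notin L_{\SLT_1,\neg\DEF}$, and any $v=b^j$ or $v=b^ja$ likewise begins with $b$.

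The only point that needs a moment's care is the choice of $N$ in the second part: it must be large enough so that every suffix of $ab^Na$ of length at most $m$ avoids the leading $a$, which is arranged by taking $N>m$. I do not expect any genuine obstacle beyond this small bookkeeping, so the two parts together give the stated witness.
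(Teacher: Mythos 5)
Your proposal is correct and follows essentially the same route as the paper: the same representation $\slt{\{a\}}{\{b\}}{\{a\}}{\emptyset}$ for membership in $\SLT_1$, and for non-definiteness the same pumping-style argument that a sufficiently long word $ab^Na$ must be placed in $V^*B$ via a short suffix, whereupon the prefix can be altered (you prepend $b$ to the suffix, the paper forms $b^{N+1}a$) to obtain a word starting with $b$ that would wrongly lie in the language. No gaps; the corner cases you list are exactly the ones that need checking.
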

\begin{proof}
The language $L_{\SLT_1,\neg\DEF}$ can be represented as
$\slt{\{a\}}{\{b\}}{\{a\}}{\emptyset}$, hence
$L_{\SLT_1,\neg\DEF}\in\SLT_1$.

Suppose, this language is definite. Then there are two finite
subsets $D_s\subset\{a,b\}^*$ and $D_e\subset\{a,b\}^*$ such that
\[L_{\SLT_1,\neg\DEF}=D_s\cup \{a,b\}^*D_e.\]
Let $k=\max\set{|w|}{w\in D_s\cup D_e}+1$.
The word $ab^ka$ belongs to
the language $L_{\SLT_1,\neg\DEF}$ but not to the subset $D_s$ due
to its length. Hence, $ab^ka\in\{a,b\}^*D_e$ and also
\[ab^ka\in\{a,b\}^+D_e\]
due to the length of the word. Then we
have also $b^{k+1}a\in\{a,b\}^+D_e$ and, therefore,
\[b^{k+1}a\in L_{\SLT_1,\neg\DEF}\]
which is a contradiction. Thus, $L_{\SLT_1,\neg\DEF}\notin\DEF$.
\end{proof}

The language $L_{\SLT_1,\neg\DEF}$ is a witness language for the
properness of the three inclusions stated in the following lemmas.

\begin{lemma}\label{l-mon-slt1}
The proper inclusion $\MON\subset\SLT_1$ holds.
\end{lemma}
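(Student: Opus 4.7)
The plan is to prove the inclusion and the properness separately, both of which are very direct once the previous lemma is available.

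For the inclusion $\MON\subseteq\SLT_1$, I would show that every monoidal language has a canonical representation as a strictly locally $1$-testable language. Given an alphabet $V$, the language $V^*$ consists of $\lambda$ together with all non-empty words $a_1a_2\ldots a_n$ whose every letter lies in $V$. Hence I can represent $V^*$ as $\slt{V}{V}{V}{\{\lambda\}}$: for words of length at least one, the conditions on the initial letter, the internal letters, and the final letter are all trivially satisfied because $B=I=E=V$, while the empty word is captured by the extra set $F=\{\lambda\}$. This gives $V^*\in\SLT_1$ and establishes $\MON\subseteq\SLT_1$.

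For properness I would invoke the witness language from Lemma~\ref{l-abna}. By that lemma, $L_{\SLT_1,\neg\DEF}=\{a\}\cup\set{ab^na}{n\geq 0}$ is a member of $\SLT_1$. It remains to observe that $L_{\SLT_1,\neg\DEF}$ is not monoidal. The language contains both the word $a$ and the word $ab$, so its alphabet must include the letters $a$ and $b$; however, $b\notin L_{\SLT_1,\neg\DEF}$ while any set of the form $V^*$ with $b\in V$ contains $b$, so $L_{\SLT_1,\neg\DEF}\neq V^*$ for every alphabet $V$. Consequently, $L_{\SLT_1,\neg\DEF}\in\SLT_1\setminus\MON$.

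No step presents a real obstacle: the only mild subtlety is matching the definition of $\slt{B}{I}{E}{F}$ correctly for $k=1$, where $F$ must accommodate the empty word, and noting that for words of length exactly one the initial and final conditions coincide (so forcing $B=E=V$ is unavoidable but harmless).
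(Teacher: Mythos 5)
Your proof is correct and follows essentially the same route as the paper's: the same representation $\slt{V}{V}{V}{\{\lambda\}}$ of $V^*$ as a strictly locally $1$-testable language, and the same witness $L_{\SLT_1,\neg\DEF}$ for properness (the paper concludes $L_{\SLT_1,\neg\DEF}\notin\MON$ from $\MON\subseteq\DEF$ and Lemma~\ref{l-abna}, while you argue non-monoidality directly, which is equally fine). One small slip: $ab$ is not a word of $L_{\SLT_1,\neg\DEF}=\{a\}\cup\set{ab^na}{n\geq 0}$; you should cite $aba$ (the case $n=1$) to force $b$ into the alphabet, and with that correction your argument stands.
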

\begin{proof}
We first prove that $\MON$ is included in $\SLT_1$. Let $L$ be a
monoidal language over an alphabet $V$. Then $L=V^*$. With
$\slt{V}{V}{V}{\lambda}$, we have a representation of the language
$L$ as a strictly locally 1-testable language. Hence,
$\MON\subseteq\SLT_1$.

A witness language for the properness is the language
$L_{\SLT_1,\neg\DEF}$ which, according to Lemma~\ref{l-abna},
belongs to the class $\SLT_1$ but not to $\DEF$ and not to $\MON$
because $\MON\subseteq\DEF$.
\end{proof}

\begin{lemma}\label{l-comb-slt1}
The proper inclusion $\COMB\subset\SLT_1$ holds.
\end{lemma}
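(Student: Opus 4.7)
The plan is to mirror the structure of Lemma~\ref{l-mon-slt1}: establish the inclusion $\COMB \subseteq \SLT_1$ by a direct construction, and then use $L_{\SLT_1,\neg\DEF}$ from Lemma~\ref{l-abna} together with a simple containment $\COMB \subseteq \DEF$ to witness the properness.

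For the inclusion, I would take an arbitrary combinational language $L = V^*X$ with $X \subseteq V$ and exhibit it in strictly locally $1$-testable form. The candidate representation is
\[
L = \slt{V}{V}{X}{\emptyset},
\]
i.e., the beginning set is all of $V$, the interior set is all of $V$, the ending set is $X$, and no short words are added. I would verify this by cases on the length of a word $w = a_1 a_2 \cdots a_n$: for $n=1$ the word belongs to $\slt{V}{V}{X}{\emptyset}$ iff $a_1 \in V \cap X = X$, which coincides with $w \in V^*X$; for $n \geq 2$ the beginning and interior conditions are vacuous (every letter lies in $V$), so membership reduces to $a_n \in X$, again matching $w \in V^*X$. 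Note that $\lambda \notin V^*X$, so the empty finite component $F = \emptyset$ is correct.

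For the properness, I would invoke Lemma~\ref{l-abna}, which gives $L_{\SLT_1,\neg\DEF} \in \SLT_1 \setminus \DEF$. It then suffices to observe the well-known chain $\COMB \subseteq \DEF$: any combinational language $V^*X$ can be written as $\emptyset \cup V^* X$ with the finite sets $A = \emptyset$ and $B = X$ required by the definition of a definite language. Consequently $L_{\SLT_1,\neg\DEF} \notin \DEF$ implies $L_{\SLT_1,\neg\DEF} \notin \COMB$, so the inclusion $\COMB \subseteq \SLT_1$ is strict.

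I do not foresee any real obstacle here; the only mild subtlety is checking the degenerate cases $n = 1$ and $X = \emptyset$ in the \SLT{}-representation, together with recalling the inclusion $\COMB \subseteq \DEF$ so that the witness from Lemma~\ref{l-abna} transfers immediately.
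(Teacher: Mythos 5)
Your proposal is correct and follows essentially the same route as the paper: the inclusion is shown via the representation $\slt{V}{V}{X}{\emptyset}$, and properness is witnessed by $L_{\SLT_1,\neg\DEF}$ from Lemma~\ref{l-abna} combined with $\COMB\subseteq\DEF$. The extra case analysis on word length and the explicit justification of $\COMB\subseteq\DEF$ are just fuller versions of steps the paper leaves implicit.
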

\begin{proof}
We first prove that $\COMB$ is included in $\SLT_1$. Let $L$ be a
combinational language over an alphabet $V$. Then $L=V^*X$ for
some subset $X\subseteq V$. With $\slt{V}{V}{X}{\emptyset}$, we
have a representation of the language $L$ as a strictly locally
1-testable language. Hence, $\COMB\subseteq\SLT_1$.

A witness language for the properness is the language
$L_{\SLT_1,\neg\DEF}$ which, according to Lemma~\ref{l-abna},
belongs to the class $\SLT_1$ but not to $\DEF$ and not to $\COMB$
because $\COMB\subseteq\DEF$.
\end{proof}

\begin{lemma}\label{l-def-slt}
The proper inclusion $\DEF\subset\SLT$ holds.
\end{lemma}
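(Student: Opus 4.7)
The plan is to prove the two claims separately: first the inclusion $\DEF\subseteq\SLT$ by a direct construction, and then strictness by invoking the witness language of Lemma~\ref{l-abna}.

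For the inclusion, let $L$ be an arbitrary definite language over an alphabet $V$, so $L = A \cup V^{*}C$ for some finite sets $A, C \subseteq V^{*}$ (I rename the second set to $C$ to avoid a clash with the prefix slot of the notation $\slt{\cdot}{\cdot}{\cdot}{\cdot}$). I would choose $k = \max\{\,|w| : w \in A \cup C\,\} + 1$, so that every word of length at least $k$ is too long to lie in $A$, and every $v \in C$ has length strictly less than $k$ and thus fits as a proper suffix inside any window of length $k$. I would then represent $L$ as
\[
\slt{V^{k}}{V^{k}}{E}{F}, \qquad
E = \set{u \in V^{k}}{u = u'v \text{ for some } v \in C},\qquad
F = L \cap V^{{}\leq k-1}.
\]
Both $V^{k}$ and $E$ are subsets of $V^{k}$, and $F$ is finite, as required.

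The verification splits in two cases. If $|w| \geq k$, then $w \notin A$ by the choice of $k$, so $w \in L$ iff $w \in V^{*}C$, iff some suffix of $w$ of length at most $k-1$ equals an element of $C$, iff the length-$k$ suffix of $w$ lies in $E$; the prefix and interior conditions are trivially satisfied because the corresponding sets are $V^{k}$. If $|w| < k$, membership in $L$ is decided directly by $F$. Hence $L \in \SLT_{k} \subseteq \SLT$ and therefore $\DEF \subseteq \SLT$.

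For the properness, I would simply appeal to Lemma~\ref{l-abna}: the language $L_{\SLT_{1},\neg\DEF}$ lies in $\SLT_{1} \subseteq \SLT$ but not in $\DEF$, which witnesses the strict inclusion $\DEF \subsetneq \SLT$. I do not expect a real obstacle here; the only delicate point is the sizing of $k$, chosen strictly larger than every word in $A \cup C$ so that (i)~long words cannot accidentally be forced into $A$ via the prefix/interior/suffix test, and (ii)~every $v \in C$ is captured as a suffix of some length-$k$ block.
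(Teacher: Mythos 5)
Your proposal is correct and follows essentially the same route as the paper's own proof: the same choice of $k=\max\{|w|:w\in A\cup C\}+1$, the same sets $B=I=V^k$, $E=V^*C\cap V^k$, $F=L\cap V^{{}\leq k-1}$, and the same witness $L_{\SLT_1,\neg\DEF}$ from Lemma~\ref{l-abna} for properness. The only difference is presentational (your verification is phrased as a chain of equivalences rather than two separate inclusions), so there is nothing to add.
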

\begin{proof}
We first prove $\DEF\subseteq\SLT$. Let $L$ be a definite language
over an alphabet $V$. Then $L=D_s\cup V^*D_e$ for some finite
subsets $D_s\subset V^*$ and $D_e\subset V^*$. Let
$k=\max\set{|w|}{w\in D_s\cup D_e}+1$. Further, let
\begin{itemize}
\item $F=\set{w}{w\in L\cap V^{{}\leq k-1}}$ be the set of all
words of $L$ with a length smaller than $k$, \item $B=V^k$ and
$I=V^k$ the set of all words over the alphabet $V$ with a length
of $k$, \item $E=V^*D_e\cap V^k$ the set of all words of the set
$V^*D_e$ with length $k$,
\end{itemize}
and $L'$ be the strictly locally $k$-testable language represented
by $\slt{B}{I}{E}{F}$. We now prove that $L=L'$ holds.

We first show $L\subseteq L'$. Let $w\in L$. If $|w|<k$, then
$w\in F$ and, hence, $w\in L'$. Otherwise, $w\in V^*D_e$ and there
are words $w_0$ and $w_1$ such that $w=w_1w_0$ and $w_0\in
V^*D_e\cap V^k$ (the word $w_0$ is the suffix of $w$ of length
$k$). Every subword of $w$ of length $k$ belongs to the set $V^k$.
Hence, the prefix of $w$ of length $k$ belongs to the set $B$,
every proper infix of $w$ of length $k$ belongs to the set $I$,
and the suffix $w_0$ belongs to the set $E$. Therefore, we have
$w\in L'$ also in this case.

We now show $L'\subseteq L$. Let $w\in L'$. If $w\in F$, then
$w\in L\cap  V^{{}\leq k-1}$ and, hence, $w\in L$. Otherwise, the
length $m=|w|$ of $w$ is at least $k$ and the word $w$ is composed
of $m$ letters $x_i\in V$ for $1\leq i\leq m$ such that
$w=x_1x_2\ldots x_m$. Then we have for the prefix $x_1x_2\ldots
x_k\in B$, for each infix $x_{j+1}x_{j+2}\ldots x_{j+k}\in I$
for~$1\leq j\leq m-1-k$, and for the suffix
$x_{m-k+1}x_{m-k+2}\ldots x_m\in E$. Therefore,
$x_{m-k+1}x_{m-k+2}\ldots x_m\in V^*D_e$ and $x_1x_2\ldots
x_{m-k}\in V^*$. Hence, $w\in V^*D_e$ and $w\in L$.

Since $L=L'$ and $L'\in \SLT_k$ by construction, we also have that
$L\in \SLT_k$ and, thus, also $L\in \SLT$.

A witness language for the properness of the inclusion
$\DEF\subseteq\SLT$ is the language $L_{\SLT_1,\neg\DEF}$ which,
according to Lemma~\ref{l-abna}, belongs to the class $\SLT_1$ and
therefore also to $\SLT$ but not to $\DEF$.
\end{proof}

The language $L_{\SLT_1,\neg\DEF}$ from Lemma~\ref{l-abna} serves
also partially for proving the incomparability of the families of
the strictly locally $k$-testable languages with the families of
the finite languages, of the nilpotent languages, and of the
definite languages.

\begin{lemma}\label{l-uncomp-def-sltk}
The classes $\SLT_k$ for $k\geq 1$ are incomparable to the classes
$\FIN$, $\NIL$, and $\DEF$.
\end{lemma}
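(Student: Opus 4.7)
The plan is to establish six non-inclusions at once, one in each direction for each of the three pairs $(\SLT_k,\FIN)$, $(\SLT_k,\NIL)$, $(\SLT_k,\DEF)$. Because of the chain $\FIN\subseteq\NIL\subseteq\DEF$, it is enough to exhibit (a)~a single language in $\SLT_k\setminus\DEF$, which is automatically outside $\NIL$ and $\FIN$, and (b)~a single language in $\FIN\setminus\SLT_k$, which automatically witnesses $\NIL\not\subseteq\SLT_k$ and $\DEF\not\subseteq\SLT_k$ as well.

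For part~(a), I would reuse the language $L_{\SLT_1,\neg\DEF}$ of Lemma~\ref{l-abna}. That lemma places it in $\SLT_1\setminus\DEF$, and the hierarchy $\SLT_1\subseteq\SLT_k$ already cited from \cite{SCR-PSP-11} lifts membership to every $\SLT_k$ with $k\ge 1$.

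For part~(b), I would take the singleton language $M_k=\{a^{k+1}\}$ over the one-letter alphabet $\{a\}$, which is trivially in $\FIN$. Assuming a representation $M_k=\slt{B}{I}{E}{F}$, the word $a^{k+1}$ has length $k+1\ge k$, so membership forces $a^k\in B$ (its prefix) and $a^k\in E$ (its suffix); the middle-infix condition is vacuous because the index range $1\le j\le n-k-1$ is empty for $n=k+1$. But then the word $a^k$ itself satisfies the length-$k$ membership test (its prefix and its suffix are both $a^k$, and again no middle infix is required), so $a^k$ would lie in the represented language, contradicting $a^k\notin M_k$. Hence $M_k\in\FIN\setminus\SLT_k$.

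The only real point requiring attention is the empty middle-infix range for $n=k+1$, which is precisely what makes this tiny witness separate the classes; everything else is bookkeeping on top of Lemma~\ref{l-abna} and the inclusions $\FIN\subseteq\NIL\subseteq\DEF$.
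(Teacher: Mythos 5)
Your proposal is correct and follows essentially the same route as the paper: it uses $L_{\SLT_1,\neg\DEF}$ from Lemma~\ref{l-abna} for the direction $\SLT_k\not\subseteq\DEF$ and the singleton $\{a^{k+1}\}$ for the direction $\FIN\not\subseteq\SLT_k$, reducing the six non-inclusions to these two via the chain $\FIN\subseteq\NIL\subseteq\DEF$ exactly as the paper does. Your treatment of the second witness is in fact slightly more careful than the paper's (you derive $a^k\in B$ and $a^k\in E$ from the definition and note the vacuous infix range, rather than just asserting the forced representation), but the argument is the same.
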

\begin{proof}
Due to the inclusion relations, it suffices to show that there is
a language in the class $\SLT_1$ (which belongs also to each other
family $\SLT_k$ for $k>1$) but which is not definite (and hence
neither nilpotent nor finite) and that there are languages~$L_k$
(for $k\geq 1$) which are finite (and, hence, nilpotent and
definite) but not strictly locally $k$-testable.

A language for the first case is $L_{\SLT_1,\neg\DEF}$ from
Lemma~\ref{l-abna} since
$L_{\SLT_1,\neg\DEF}\in\SLT_1\setminus\DEF$.

Languages for the other incomparabilities are $L_k=\{a\}^{k+1}$
for $k\geq 1$. Every such language~$L_k$ is finite. Let $k$ be a
natural number. Suppose that the language $L_k$ is also strictly
locally $k$-testable. Then it is represented by
$\slt{\{a\}^k}{\emptyset}{\{a\}^k}{\emptyset}$. But then, we also
have $a^k\in L_k$ which is a contradiction.
Hence,~$L_k\in\FIN\setminus\SLT_k$ for $k\geq 1$.
\end{proof}

The incomparabilies of the families of the strictly locally
($k$-)testable languages to the families $\UF$ of the union-free
languages, $\SUF$ of the suffix-closed languages, $\COMM$ of the
commutative languages, and $\CIRC$ of the circular languages
follow, due to the inclusion relations, from the incomparabilities
of the classes $\UF$, $\SUF$, $\COMM$, and $\CIRC$ to the classes
$\COMB$ of the combinational languages and $\NC$ of the
non-counting languages which were proved in \cite{HolTru15-ncma}.

Regarding the class $\ORD$ of the ordered languages, we give the
following relations without proofs\footnote{Proofs have been found by
be the authors after the acceptance of this paper.}.


%

\begin{lemma}\label{l-ord}
The proper inclusion $\SLT_1\subset\ORD$ holds. The classes
$\SLT_k$ for $k\geq 2$ are incomparable to the class $\ORD$.
\end{lemma}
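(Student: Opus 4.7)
The lemma splits into four parts: $\SLT_1 \subseteq \ORD$, $\SLT_1 \neq \ORD$, $\SLT_k \not\subseteq \ORD$ for every $k \geq 2$, and $\ORD \not\subseteq \SLT_k$ for every $k \geq 2$. I would treat them in turn, and the hard part will be the first one.

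For $\SLT_1 \subseteq \ORD$, given a language $L = F \cup (B \cap E) \cup B I^* E \in \SLT_1$, the plan is to construct a non-minimal deterministic finite automaton that carries a compatible total order. The minimal DFA is typically not totally orderable: already for $L_{\SLT_1,\neg\DEF} = \{a\} \cup a b^* a$, the $a$- and $b$-preimages of the unique sink impose conflicting interval constraints on any total order of the five minimal states. The fix is to split the sink into several copies indexed by the ``cause of death''---a low sink $z_\bot^-$ reached from $z_0$ on a letter outside $B$, and distinct high sinks reached on invalid continuations from accepting or middle states---and to place $z_\bot^-$ below every live state and all high sinks above every live state in the total order. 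For the concrete example above, the seven-state automaton ordered $z_\bot^- \prec z_0 \prec z_1 \prec z_2 \prec z_3 \prec z_{aaa} \prec z_{aab}$, with $\delta(z_3, a) = z_{aaa}$, $\delta(z_3, b) = z_{aab}$, and $\delta(z_{aaa}, \cdot) = z_{aab} = \delta(z_{aab}, \cdot)$, yields the non-decreasing $a$- and $b$-image sequences $(1,3,5,5,6,7,7)$ and $(1,1,4,4,7,7,7)$. The general construction would replicate this pattern for an arbitrary $[B, I, E, F]$, and monotonicity reduces to a finite case analysis on the membership of the input letter in each of $B$, $I$, $E$ and their complements.

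For $\SLT_1 \neq \ORD$ and $\ORD \not\subseteq \SLT_k$ (for every $k \geq 1$), a single witness suffices: the language $V^* a V^*$ over $V = \{a, b\}$. Its two-state DFA with $q_0 \prec q_1$, $\delta(q_0, a) = q_1$, $\delta(q_0, b) = q_0$, and $\delta(q_1, \cdot) = q_1$ is trivially monotone. To see $V^* a V^* \notin \SLT_k$ for any $k$, I would pump on $b$: every candidate spec $[B, I, E, F]$ accepting $a b^{2k}$ is forced to contain $b^k$ in both $I$ and $E$, and then either $b^k \in B$ (so the spec accepts $b^{2k+1} \notin V^* a V^*$, a contradiction) or $b^k \notin B$ (so the spec rejects $b^k a b^k \in V^* a V^*$, a contradiction).

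For $\SLT_k \not\subseteq \ORD$ when $k \geq 2$, I would use $L_1 = \{ab\}^+$ from the $\SLT$-hierarchy cited above the present lemma. It sits in $\SLT_2 \subseteq \SLT_k$ via the spec $B = E = \{ab\}$, $I = \{ab, ba\}$, $F = \emptyset$. To exclude $L_1$ from $\ORD$, I would compute its syntactic monoid, whose elements are $\{1, a, b, ab, ba, 0\}$ with relations $a^2 = b^2 = 0$, $aba = a$, $bab = b$, $(ab)^2 = ab$, $(ba)^2 = ba$, then observe that $a$ and $ab$ generate the same right principal ideal $\{a, ab, 0\}$ and are therefore $\mathcal{R}$-equivalent, and invoke the Brzozowski--Fich theorem that a regular language is accepted by a partially ordered DFA if and only if its syntactic monoid is $\mathcal{R}$-trivial. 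Since every totally ordered DFA is a fortiori partially ordered, failure of $\mathcal{R}$-triviality excludes $L_1$ from $\ORD$.
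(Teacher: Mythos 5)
The paper does not actually contain a proof of this lemma---the authors state the relations ``without proofs'' and a footnote says the proofs were found only after acceptance---so your attempt can only be measured on its own merits. Your first two parts hold up. The idea for $\SLT_1\subseteq\ORD$ (split the dead state of the obvious automaton for $F\cup(B\cap E)\cup BI^*E$ into a low sink entered from $z_0$ and a high sink entered from live states, and order the live states by stage) does generalize: with the order $\bot\prec z_0\prec z_B^{\bar E}\prec z_B^{E}\prec z^{I\bar E}\prec z^{IE}\prec z^{\bar IE}\prec\top$, all eight letter classes determined by membership in $B$, $I$, $E$ yield non-decreasing image sequences. And $\{a,b\}^*\{a\}\{a,b\}^*$ is indeed ordered but lies in no $\SLT_k$, which settles properness and one direction of the incomparabilities.

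Part 3 is where the argument breaks, and it breaks twice. First, you conflate two different notions: an ordered DFA in the Shyr--Thierrin sense used here (a \emph{total} order on states with \emph{monotone} transition maps) is not ``a fortiori'' a partially ordered DFA in the Brzozowski--Fich sense (transitions never decrease, so the only cycles are self-loops). Monotone automata may have nontrivial cycles: already the two-state DFA for the combinational language $V^*\{a\}$ has the cycle $q_0\xrightarrow{a}q_1\xrightarrow{b}q_0$ and a non-$\mathcal{R}$-trivial syntactic monoid (its two constant maps are $\mathcal{R}$-equivalent), so if your reduction were valid it would contradict $\COMB\subset\ORD$ from Figure~\ref{subreg-hier-fig3}. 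Second, and fatally, the witness itself fails: $\{ab\}^+$ \emph{is} ordered. Take five states with the total order $z_\bot\prec z_0\prec z_+\prec z_a\prec z_\top$, initial state $z_0$, sole accepting state $z_+$, and transitions
\[
\delta(\cdot,a)=(z_\bot,\ z_a,\ z_a,\ z_\top,\ z_\top),\qquad
\delta(\cdot,b)=(z_\bot,\ z_\bot,\ z_\bot,\ z_+,\ z_\top),
\]
listed in the order of the states. Both image sequences occupy positions $(1,4,4,5,5)$ and $(1,1,1,3,5)$, hence are non-decreasing, and the automaton accepts exactly $\{ab\}^+$ (the only live computations cycle $z_0\mbox{ or }z_+\xrightarrow{a}z_a\xrightarrow{b}z_+$, and $z_\bot$, $z_\top$ are absorbing non-accepting sinks). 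The trick is the same one you used yourself in part~1: a sink placed \emph{below} the live states absorbs the $b$-failures while a sink placed \emph{above} absorbs the $a$-failures, which dissolves the apparent order conflict. So you still owe a language in $\SLT_2\setminus\ORD$, and neither your algebraic criterion nor your witness supplies one; any correct witness must defeat this low-sink/high-sink construction, which $\{ab^h\}^+$ does not.
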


If we combine these results with those mentioned in \cite{Tru21-fi}, we obtain the following statement.

\begin{figure}[htb]
\centerline{%
\scalebox{0.7}{\begin{tikzpicture}[node distance=16mm and 26mm,on
grid=true, background rectangle/.style=
{
draw=black!80, rounded corners=1ex}, show background rectangle]
\node (REG) {$\REG$}; \node (PS) [below=of REG] {$\PS$}; \node
(NC) [below=of PS] {$\NC$}; \node (d1) [below=of NC] {}; \node
(d2) [below=of d1] {}; \node (ORD) [below=of d2] {$\ORD$}; \node
(DEF) [below=of ORD] {$\DEF$}; \node (COMB) [below right=of DEF]
{$\COMB$}; \node (NIL) [below=of DEF] {$\NIL$}; \node (FIN)
[below=of NIL] {$\FIN$}; \node (SLT1) [right=of DEF] {$\SLT_1$};
\node (SLT2) [above=of SLT1] {$\SLT_2$}; \node (SLTn) [above=of
SLT2] {$\vdots$}; \node (SLT) [above=of SLTn] {$\SLT$}; \node
(SUF) [right=of SLT2] {$\SUF$}; \node (UF) [right=of SUF] {$\UF$};
\node (COMM) [right=of UF] {$\COMM$}; \node (CIRC) [above=of COMM]
{$\CIRC$}; \node (d3) [below=of COMB] {}; \node (MON) [below
right=of COMB] {$\MON$}; \draw[hier] (FIN) to
node[edgeLabel]{\footnotesize{\cite{Wi78}}} (NIL); \draw[hier]
(MON) to
node[edgeLabel,pos=.4]{\footnotesize{\cite{Tru18-TRsubreg}}}
(NIL); \draw[hier] (MON) [bend right=8]to
node[edgeLabel,pos=.4]{\footnotesize{\ref{l-mon-slt1}}} (SLT1);
\draw[hier] (MON) [bend right=15]to
node[edgeLabel,pos=.4]{\footnotesize{\cite{Tru18-TRsubreg}}}
(SUF); \draw[hier] (MON) [bend right=29]to
node[edgeLabel,pos=.4]{\footnotesize{\cite{Tru18-TRsubreg}}}
(COMM); \draw[hier] (MON) [bend right=22]to
node[edgeLabel,pos=.4]{\footnotesize{\cite{Tru18-TRsubreg}}} (UF);
\draw[hier] (SLT1) to
node[edgeLabel,pos=.4]{\footnotesize{\cite{SCR-PSP-11}}} (SLT2);
\draw[hier] (SLT2) to
node[edgeLabel,pos=.4]{\footnotesize{\cite{SCR-PSP-11}}} (SLTn);
\draw[hier] (SLTn) to
node[edgeLabel,pos=.4]{\footnotesize{\cite{SCR-PSP-11}}} (SLT);
\draw[hier] (SLT) [bend right=4]to
node[edgeLabel,pos=.4]{\footnotesize{\cite{McNPap71}}} (NC);
\draw[hier] (DEF) [bend right=4]to
node[edgeLabel,pos=.6]{\footnotesize{\ref{l-def-slt}}} (SLT);
\draw[hier] (NIL) to
node[edgeLabel,pos=.4]{\footnotesize{\cite{Wi78}}} (DEF);
\draw[hier] (COMB) to
node[edgeLabel,pos=.4]{\footnotesize{\cite{Ha69}}} (DEF);
\draw[hier] (COMB) to
node[edgeLabel,pos=.4]{\footnotesize{\ref{l-comb-slt1}}} (SLT1);
\draw[hier] (ORD) to
node[edgeLabel,pos=.4]{\footnotesize{\cite{ShyThi74-ORD}}} (NC);
\draw[hier] (DEF) to
node[edgeLabel,pos=.5]{\footnotesize{\cite{HolTru15-ncma}}} (ORD);
\draw[hier] (NC) to
node[edgeLabel,pos=.4]{\footnotesize{\cite{ShyThi74-PS}}} (PS);
\draw[hier] (PS) to
node[edgeLabel]{\footnotesize{\cite{HolTru15-ncma}}} (REG);
\draw[hier] (SUF) [bend right=22]to
node[edgeLabel]{\footnotesize{\cite{HolTru15-ncma}}} (PS);
\draw[hier] (COMM) to
node[edgeLabel,pos=.4]{\footnotesize{\cite{HolTru15-ncma}}}
(CIRC); \draw[hier] (CIRC) [bend right=25]to
node[edgeLabel]{\footnotesize{\cite{HolTru15-ncma}}} (REG);
\draw[hier] (UF) [bend right=25]to
node[edgeLabel]{\footnotesize{\cite{HolTru15-ncma}}} (REG);
\draw[hier] (SLT1) to
node[edgeLabel,pos=.4]{\footnotesize{\ref{l-ord}}} (ORD);
\end{tikzpicture}}}
\caption{Hierarchy of subregular language
families}\label{subreg-hier-fig3}
\end{figure}

\begin{theorem}\lab{th-struct-hier}
The inclusion relations presented in Figure~\ref{subreg-hier-fig3}
hold. An arrow from an entry~$X$ to an entry~$Y$ depicts the
proper inclusion $X\subset Y$; if two families are not connected
by a directed path, then they are incomparable.
\end{theorem}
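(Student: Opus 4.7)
The plan is to assemble the theorem from the lemmas established in this section together with the already-known hierarchy among the classical subregular families. I would first walk through Figure~\ref{subreg-hier-fig3} arrow by arrow and verify that each is justified by its label: every drawn arrow carries a pointer either to one of the lemmas proved above (Lemmas~\ref{l-mon-slt1}, \ref{l-comb-slt1}, \ref{l-def-slt}, \ref{l-ord}) or to a published reference. The properness of each inclusion that involves $\SLT_k$ or~$\SLT$ is witnessed either by $L_{\SLT_1,\neg\DEF}$ from Lemma~\ref{l-abna}, by the chain separators $L_h=\{ab^h\}^+$ of~\cite{SCR-PSP-11}, or by the finite languages $L_k=\{a\}^{k+1}$ from Lemma~\ref{l-uncomp-def-sltk}; the remaining proper inclusions are already proper in the survey~\cite{Tru21-fi}. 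Transitivity then yields every undrawn inclusion implied by a directed path.

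The second half of the proof concerns the incomparabilities, which are the more substantive part. For each pair of families $X,Y$ not joined by a directed path, I would produce languages in $X\setminus Y$ and in $Y\setminus X$. I would partition these pairs according to the families involved: pairs that involve neither $\SLT_k$ nor~$\SLT$ are inherited from~\cite{Tru21-fi}; pairs of $\SLT_k$ (for $k\geq 1$) or of~$\SLT$ with one of~$\FIN$, $\NIL$, $\DEF$ are handled by Lemma~\ref{l-uncomp-def-sltk}; pairs involving $\ORD$ are handled by Lemma~\ref{l-ord}, using a language in $\SLT_2\setminus\ORD$ as a witness that also separates $\SLT$ from $\ORD$. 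The pairs that remain pit $\SLT_k$ or~$\SLT$ against one of~$\UF$, $\SUF$, $\COMM$, $\CIRC$.

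The step I expect to be the main obstacle is this last case, though it resolves uniformly by a sandwich argument. Since $\COMB\subset\SLT_1\subseteq\SLT_k\subseteq\SLT\subset\NC$, any language witnessing $\COMB\not\subseteq X$ lies a fortiori in $\SLT_k\setminus X$, and any language witnessing $X\not\subseteq\NC$ lies in $X\setminus\SLT$ and hence in $X\setminus\SLT_k$. The reference~\cite{HolTru15-ncma}, invoked in the paragraph preceding Lemma~\ref{l-ord}, supplies witnesses of both kinds for each $X\in\{\UF,\SUF,\COMM,\CIRC\}$, delivering all the remaining incomparabilities in one stroke. Collecting these groups of separations together with the labelled arrows of Figure~\ref{subreg-hier-fig3} then completes the proof.
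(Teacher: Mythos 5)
Your proposal is correct and follows essentially the same route as the paper, which likewise assembles the theorem by justifying each arrow of Figure~\ref{subreg-hier-fig3} via Lemmas~\ref{l-abna}--\ref{l-ord} and the cited literature, and derives the incomparabilities of the $\SLT$-families with $\UF$, $\SUF$, $\COMM$, and $\CIRC$ by exactly your sandwich argument through $\COMB\subset\SLT_1\subseteq\cdots\subseteq\SLT\subset\NC$ and \cite{HolTru15-ncma}. The only (shared) soft spot is the $\ORD$ case, which both you and the paper delegate to the unproved Lemma~\ref{l-ord}.
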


An edge label in Figure~\ref{subreg-hier-fig3} refers to a paper
or a lemma in the present paper where the respective inclusion is
proved (it is not necessarily the first paper where the inclusion
is already mentioned). The incomparabilities which are not related
to strictly locally testable languages are proved
in~\cite{Tru18-TRsubreg}.\pagebreak

\subsection{Contextual Grammars}

Let $F\in\cF$ be a family of languages. A contextual grammar with selection in $F$ is
a triple
$G=(V,\cP,A)$
where
\begin{btlists}
\item $V$ denotes an alphabet,
\item $\cP$ is a finite set of pairs $(S,C)$ with a language $S$ over some subset $U$ of the
alphabet $V$ which belongs to the family $F$ with respect to the alphabet $U$ and a finite
set $C\subset V^*\times V^*$,
\item $A$ denotes a finite subset of $V^*$.
\end{btlists}

The set $V$ is called the basic alphabet; for a selection pair $(S,C)\in\cP$, the language~$S$
is called a selection language and the set $C$ is called a set of contexts of the grammar~$G$;
the elements of $A$ are called axioms.

We now define the derivation modes for contextual grammars with selection.

Let $G=(V,\cP,A)$ be a contextual grammar with selection.
The external derivation relation
$\xLra[\mathrm{ex}]{}$
is defined as follows:
a word $x$ derives a word $y$ if and only if there is a
pair $(S,C)\in\cP$ such that $x\in S$ and $y=uxv$ for some pair $(u,v)\in C$.
The internal derivation relation
$\xLra[\mathrm{in}]{}$
is defined as follows: a word $x$ derives a word $y$
if and only if there are words $x_1,x_2,x_3\in V^*$ such that~$x=x_1x_2x_3$ and a pair $(S,C)\in\cP$ such
that $x_2\in S$ and $y=x_1ux_2vx_3$ for some pair~$(u,v)\in C$.

By $\xLra[\alpha]{*}$ we denote the reflexive and transitive closure of the derivation
relation~$\xLra[\alpha]{}$ for~${\alpha}\in\Sets{\mathrm{ex},\mathrm{in}}$.
The language generated externally or internally by the grammar $G$ is defined as
\[L_{\alpha}(G)=\Set{ z }{ x\xLra[\alpha]{*} z \mbox{ for some } x\in A }\]
for ${\alpha}\in\Sets{\mathrm{ex},\mathrm{in}}$. If the derivation mode is known from the
context, we omit the index~$\alpha$.
For a family~$\mathfrak{L}$ of languages, we denote by~$\cEC(\mathfrak{L})$ and $\cIC(\mathfrak{L})$
the family of all languages generated externally and internally, respectively, by contextual grammars
with selection in $\mathfrak{L}$ (where all selection languages belong to the family~$\mathfrak{L}$).

From the definition follows that the subset relation is preserved under the use of contextual
grammars: if we allow more, we do not obtain less.

\begin{lemma}\label{l-context-gramm-monoton}
For any two language classes $X$ and $Y$ with $X\subseteq Y$,
we have the inclusions
\[\cEC(X)\subseteq\cEC(Y) \qmand \cIC(X)\subseteq\cIC(Y).\]
\end{lemma}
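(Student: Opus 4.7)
The proof is essentially a one-line verification that the same grammar witnesses membership in both families, so the plan is mostly to unfold the definitions carefully and observe that nothing in the mechanism of derivation depends on the selection class, only on the individual selection languages that appear.

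My plan is to treat both inclusions in parallel since the argument is identical except for the name of the derivation relation. I would start by fixing an arbitrary language $L \in \cEC(X)$ (respectively $L \in \cIC(X)$) and, by the definition just given, choosing a contextual grammar $G = (V, \cP, A)$ with selection in $X$ such that $L = L_{\mathrm{ex}}(G)$ (respectively $L = L_{\mathrm{in}}(G)$). The condition ``selection in $X$'' means precisely that for every pair $(S, C) \in \cP$, the selection language $S$ belongs to the family $X$ with respect to some subset of $V$ over which it is defined.

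The next step is to observe that the hypothesis $X \subseteq Y$ immediately gives $S \in Y$ for every such $S$. Consequently, the very same triple $G = (V, \cP, A)$ satisfies the definition of a contextual grammar with selection in $Y$. Since the derivation relations $\xLra[\mathrm{ex}]{}$ and $\xLra[\mathrm{in}]{}$ are defined purely in terms of the pairs $(S, C) \in \cP$, the set of axioms $A$, and the alphabet $V$, and not in terms of the family from which the selection languages are drawn, we have
\[L_{\mathrm{ex}}(G) \mbox{ (respectively } L_{\mathrm{in}}(G) \mbox{) computed as a grammar with selection in } Y = L,\]
which yields $L \in \cEC(Y)$ (respectively $L \in \cIC(Y)$).

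There is no genuine obstacle here: the lemma is a structural monotonicity statement, and the only thing that could go wrong would be if the definition of $\cEC$ or $\cIC$ secretly used some property of $X$ beyond ``each selection language lies in $X$''. Since it does not, the argument reduces to quoting the definitions. I would keep the proof to a few lines and present it once for both derivation modes, noting that the choice of $\alpha \in \{\mathrm{ex}, \mathrm{in}\}$ plays no role in the reasoning.
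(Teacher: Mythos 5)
Your proposal is correct and matches the paper's intent exactly: the paper states this lemma without a written proof, merely remarking that it ``follows from the definition,'' and your argument is precisely the obvious unfolding of that remark --- the same grammar $G$ witnesses membership in both families because each selection language in $X$ is also in $Y$ and the derivation relations do not depend on the ambient family. Nothing further is needed.
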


\section{Results}

\subsection{External Contextual Grammars}

When we speak about contextual grammars in this subsection, we mean contextual grammars with external derivation
(also called external contextual grammars). A language of an external contextual grammar is a language which is
externally generated.

In \cite{Das05}, contextual grammars were investigated where the selection languages are finite, monoidal, combinational,
definite, nilpotent, commutative, or suffix-closed and a hierarchy of the language families generated was presented.
In the papers \cite{DasManTru11b,DasManTru12a,Tru14-ncma}, results on the power of external
contextual grammars with circular, ordered, union-free, or definite selection languages are given. The language families
generated by such systems were inserted into the hierarchy from \cite{Das05}. Furthermore, subregular language
families $\cF_n$ were considered and integrated which are obtained by restricting to $n$ states, non-terminal symbols,
or production rules to accept or to generate regular languages (\cite{Tru21-fi}).
We consider here only subregular families defined by structural properties (not resources).


We now present a witness language to prove a proper inclusion and incomparabilities regarding ordered languages
as selection languages.

\begin{lemma}\lab{l-ec-35}
Let $L_{\ORD,\neg\SLT}=\sets{a}^*\cup\sets{a}^*\sets{b}\sets{a}^*\cup\sets{c}\sets{a}^*\sets{b}\sets{a}^*\sets{c}$.
Then
\[L_{\ORD,\neg\SLT}\in\cEC(\ORD)\setminus\cEC(\SLT).\]
\end{lemma}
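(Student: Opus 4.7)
The plan is to prove the two halves of the statement separately. For $L_{\ORD,\neg\SLT}\in\cEC(\ORD)$ I would exhibit an explicit contextual grammar, taking $G=(V,\cP,A)$ with basic alphabet $V=\{a,b,c\}$, axiom set $A=\{\lambda,b\}$, and
\[\cP=\Sets{(\{a\}^*,\{(\lambda,a)\}),\;(\{a\}^*\{b\}\{a\}^*,\{(\lambda,a),(a,\lambda),(c,c)\})}.\]
Both selection languages are ordered: $\{a\}^*$ over $\{a\}$ is even monoidal, and $\{a\}^*\{b\}\{a\}^*$ over $\{a,b\}$ is accepted by the three-state deterministic automaton counting occurrences of $b$ whose states $z_0\preceq z_1\preceq z_2$ form a chain respected by all transitions. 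A routine verification then gives $L_{\mathrm{ex}}(G)=L_{\ORD,\neg\SLT}$: starting from $\lambda$ the first pair generates $\{a\}^*$, starting from $b$ the contexts $(\lambda,a),(a,\lambda)$ of the second pair generate $\{a\}^*\{b\}\{a\}^*$, the context $(c,c)$ then yields $\{c\}\{a\}^*\{b\}\{a\}^*\{c\}$, and no selection language matches a word already containing two $c$'s so the derivation terminates.

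For the non-inclusion I expect the main difficulty to lie in combining a pigeon-hole argument on the last derivation step with the pumping property of $\SLT$ languages. Suppose towards a contradiction that $L_{\ORD,\neg\SLT}=L_{\mathrm{ex}}(G')$ for some contextual grammar $G'=(V,\cP',A')$ whose selection languages all belong to $\SLT$, and focus on the infinite set $\{ca^nba^nc:n\geq 0\}$. I would first observe that no non-trivial context can be adjoined externally to a word already in $\{c\}\{a\}^*\{b\}\{a\}^*\{c\}$, since this would introduce a third $c$, a second $b$, or a boundary letter different from $c$. Hence for every $n$ exceeding the axiom lengths the last non-trivial step of a derivation of $ca^nba^nc$ is a single ``$c$-adjoining'' step $x\xLra[\mathrm{ex}]{}cu'xv'c=ca^nba^nc$, with $x$ in some selection language $S$ and $(cu',v'c)$ in the associated context set. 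Looking at where the unique $b$ in $u'xv'=a^nba^n$ can be, one sees that placing $b$ in $u'$ or $v'$ hard-codes $n$ into the context and so covers only finitely many values of $n$; for all but finitely many $n$ the $b$ therefore sits inside $x$, forcing $u'=a^{p_1}$, $v'=a^{p_2}$, and $x=a^{n-p_1}ba^{n-p_2}\in S$. By pigeon-hole over the finite set of selection pairs and contexts, some fixed $(S,C)$ and $(ca^{p_1},a^{p_2}c)\in C$ are reused for infinitely many~$n$.

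Choosing $n$ from this infinite family with $n-p_1,n-p_2\geq k$, where $S\in\SLT_k$, the word $w=a^{n-p_1}ba^{n-p_2}\in S$ has prefix and suffix of length $k$ equal to $a^k$ and contains $a^k$ among its internal $k$-factors. Therefore $a^k$ lies in each of the sets $B$, $I$, and $E$ characterising $S$ as a strictly locally $k$-testable language, and so every word $a^i$ with $i\geq k$---whose prefix, suffix and all internal $k$-factors equal the single block $a^k$---belongs to $S$. This is the technical heart of the proof: lifting the single word $w\in S$ to the infinite family $\{a^i:i\geq k\}\subseteq S$. Since $a^i\in L_{\ORD,\neg\SLT}=L_{\mathrm{ex}}(G')$ the word $a^i$ is derivable, and applying the context $(ca^{p_1},a^{p_2}c)\in C$ to it yields $ca^{p_1+i+p_2}c\in L_{\mathrm{ex}}(G')$; but this word has two $c$'s and no $b$, so it belongs to none of the three components of $L_{\ORD,\neg\SLT}$, contradicting the assumption $L_{\mathrm{ex}}(G')=L_{\ORD,\neg\SLT}$.
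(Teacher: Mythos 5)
Your proof is correct and follows essentially the same route as the paper: an explicit contextual grammar with ordered selection languages (yours is a minor variant of the paper's, with the same two-level structure), and for the non-inclusion the same key observation that an infinite strictly locally $k$-testable selection language used in the step that wraps the two $c$'s around a word of $\{a\}^*\{b\}\{a\}^*$ must have $a^k$ in each of $B$, $I$, and $E$, hence contain a power of $a$, so that some word $ca^jc\notin L_{\ORD,\neg\SLT}$ becomes derivable. Your write-up is more detailed than the paper's (restricting to the diagonal words $ca^nba^nc$ to guarantee both $a$-blocks are long, and pigeonholing over the finitely many selection pairs and contexts in the last non-trivial derivation step), but the underlying argument is the same.
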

\begin{proof}
The language $L_{\ORD,\neg\SLT}$ can be generated by the contextual grammar
\[\left(\Sets{a,b,c},\Sets{(\sets{a,b}^*,\Sets{(\lambda,a),(a,\lambda)}),(\sets{a}^*\sets{b}\sets{a,b}^*,\Sets{(c,c)})},
          \Sets{\lambda,b}\right)\]
where the selection languages are ordered: For $\sets{a,b}^*$, only one state is needed; the other selection language
is accepted by a deterministic finite automaton where the transition function is given by~$\delta(z_0,a)=z_0$
and~$\delta(z_0,b)=z_1=\delta(z_1,a)=\delta(z_1,b)$.

Assume that the language $L_{\ORD,\neg\SLT}$ can be generated by a contextual grammar with strictly locally testable
selection languages.
The subset $\sets{c}\sets{a}^*\sets{b}\sets{a}^*\sets{c}$ of $L_{\ORD,\neg\SLT}$ is infinite. Therefore, there is an
infinite selection language $S\subseteq\sets{a}^*\sets{b}\sets{a}^*$ which is used to obtain words of the
set~$\sets{c}\sets{a}^*\sets{b}\sets{a}^*\sets{c}$. Hence, to $S$ belongs some context $(u,v)$ with $u\in\sets{c}\sets{a}^*$
and $v\in\sets{a}^*\sets{c}$. If~$S$ is a strictly locally $k$-testable language, then
$S=\slt{B}{I}{E}{F}$ and $a^k\in B\cap I\cap E$. Then, we have also~$a^k\in S$. Therefore, a word from the
set $\sets{c}\sets{a}^*\sets{c}$ is generated which does not belong to the language~$L_{\ORD,\neg\SLT}$. This contradiction
implies that~$L_{\ORD,\neg\SLT}\notin\cEC(\SLT)$.
\end{proof}

We now prove the mentioned proper inclusion.

\begin{lemma}\label{l-ec-slt1-ord}
The proper inclusion $\cEC(\SLT_1)\subset\cEC(\ORD)$ holds.
\end{lemma}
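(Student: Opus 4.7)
The plan is to derive this lemma almost entirely from results already established in the excerpt, combining a monotonicity argument with a reuse of the witness language from Lemma~\ref{l-ec-35}.

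First I would establish the inclusion $\cEC(\SLT_1)\subseteq\cEC(\ORD)$. By Lemma~\ref{l-ord}, we have the (proper) subregular inclusion $\SLT_1\subset\ORD$. Applying Lemma~\ref{l-context-gramm-monoton} with $X=\SLT_1$ and $Y=\ORD$ then yields the desired inclusion at the level of externally generated language families.

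For the properness, I would exhibit a language that belongs to $\cEC(\ORD)$ but not to $\cEC(\SLT_1)$. The natural candidate is the witness language
\[L_{\ORD,\neg\SLT}=\sets{a}^*\cup\sets{a}^*\sets{b}\sets{a}^*\cup\sets{c}\sets{a}^*\sets{b}\sets{a}^*\sets{c}\]
of Lemma~\ref{l-ec-35}, which lies in $\cEC(\ORD)\setminus\cEC(\SLT)$. Since $\SLT_1\subseteq\SLT$ by definition, Lemma~\ref{l-context-gramm-monoton} gives $\cEC(\SLT_1)\subseteq\cEC(\SLT)$, so any language outside $\cEC(\SLT)$ is also outside $\cEC(\SLT_1)$. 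Hence $L_{\ORD,\neg\SLT}\in\cEC(\ORD)\setminus\cEC(\SLT_1)$, which together with the inclusion above proves the proper inclusion.

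There is no substantial obstacle here: both the inclusion and the separating witness are already available in the preceding material, so the proof is essentially a two-line bookkeeping argument chaining Lemmas~\ref{l-ord}, \ref{l-context-gramm-monoton}, and \ref{l-ec-35}.
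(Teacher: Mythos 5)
Your proposal is correct and follows exactly the paper's own argument: the inclusion comes from Lemma~\ref{l-ord} combined with Lemma~\ref{l-context-gramm-monoton}, and the properness comes from the witness language $L_{\ORD,\neg\SLT}\in\cEC(\ORD)\setminus\cEC(\SLT)$ of Lemma~\ref{l-ec-35}. Your explicit remark that $\cEC(\SLT_1)\subseteq\cEC(\SLT)$ makes precise a step the paper leaves implicit, but the route is the same.
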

\begin{proof}
The inclusion $\cEC(\SLT_1)\subseteq\cEC(\ORD)$ follows from Lemma~\ref{l-ord} and Lemma~\ref{l-context-gramm-monoton}.
A witness language for the properness is $L_{\ORD,\neg\SLT}\in\cEC(\ORD)\setminus\cEC(\SLT)$ from Lemma~\ref{l-ec-35}.
\end{proof}

Many incomparability results have been published in \cite{Das05,Das-Analele15,Tru21-fi}.
The only open questions are whether the class $\cEC(\ORD)$ is incomparable to the classes $\cEC(\SLT)$
and $\cEC(\SLT_k)$ for $k\geq 2$. We have the following conjecture. If this proves to be true, then
we have the incomparabilities together with Lemma~\ref{l-ec-35}.

\begin{conjecture}
There is a language $L_{\SLT_2,\neg\ORD}\in\cEC(\SLT_2)\setminus\cEC(\ORD)$.
\end{conjecture}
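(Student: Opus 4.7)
The plan is to prove the conjecture by exhibiting a witness language. Since Lemma~\ref{l-ord} tells us that $\SLT_k$ is incomparable with $\ORD$ for $k\geq 2$, a natural candidate in $\SLT_2\setminus\ORD$ is $(ab)^+$, and I propose to use this language as the core of the witness. The witness will be of the form $L=(ab)^+\cup R$, where $R\subseteq\{a,b\}^*\setminus(ab)^+$ is a union of regular ``perturbation'' families chosen so that every ordered language containing a cofinite subset of $\{(ab)^n:n\geq 1\}$ necessarily meets $R$.

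For the inclusion $L\in\cEC(\SLT_2)$, I would exhibit an explicit grammar using $\SLT_2$-selection languages. The core pair $\bigl((ab)^+,\{(\lambda,ab)\}\bigr)$ together with the axiom $ab$ generates $(ab)^+$, via the representation $(ab)^+=\slt{\{ab\}}{\{ab,ba\}}{\{ab\}}{\emptyset}$. Each component of $R$ has the shape $\{(ab)^n w : n\geq 1\}$ for a fixed word $w$ that disrupts the $(ab)^+$-pattern; such a component admits an analogous $\SLT_2$-representation and can be grown from a short axiom through iterated application of the context $(ab,\lambda)$ paired with its own $\SLT_2$-selection.

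For the exclusion $L\notin\cEC(\ORD)$, suppose $L=L_{\mathrm{ex}}(G)$ for some grammar $G=(V,\cP,A)$ with ordered selection languages. Because $A$ is finite, there must be some pair $(S,C)\in\cP$ and some context $(u,v)\in C$ with $u\,(ab)^n\,v=(ab)^{n+k}$ for some $k\geq 1$ that applies to infinitely many sentential forms $(ab)^n$. Every sentential form lies in $L$, so $S\cap L$ contains all but finitely many $(ab)^n$. Moreover, for every $X\in S\cap L$ the word $uXv$ must also lie in $L$, and checking this condition against the pieces of $R$ forces $S\cap R=\emptyset$. Thus any $\ORD$-grammar would supply a selection $S\in\ORD$ that contains cofinitely many $(ab)^n$ and is disjoint from $R$.

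The main obstacle is the construction of $R$ together with the proof that the situation just described is impossible. Every ``natural'' ordered overapproximation of $(ab)^+$ must be defeated by some component of $R$: definite languages of the form $\{a,b\}^*\,v$, bigram-avoidance languages such as ``no factor $aa$'' or ``no factor $bb$'', combinational languages, and suitable ordered intersections of these. Candidate components of $R$ therefore include families such as $\{(ab)^n\,aa:n\geq 1\}$ and $\{(ab)^n\,bb:n\geq 1\}$ to defeat suffix-based and single-bigram supersets, together with further families ending in $ab$ but carrying an $aa$ (respectively $bb$) factor, and analogous families ending in deeper $(ab)^m$-suffixes to rule out definite overapproximations. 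The argument then closes by showing that the only ordered selection avoiding all of $R$ and still covering cofinitely many $(ab)^n$ would have to agree, on sufficiently long words, with the alternating language $(ab)^+\cup b(ab)^+$; the minimal DFA of any such language contains the $2$-cycle on the last-bigram states $ab$ and $ba$ under the actions of $a$ and $b$, which is precisely the obstruction used to show $(ab)^+\notin\ORD$ and which prevents that DFA from being monotonically ordered.
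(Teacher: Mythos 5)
First, a point of orientation: the paper does not prove this statement --- it is stated as a \emph{conjecture} and is explicitly listed among the open problems in the conclusions. So there is no proof of the authors' to compare yours against; what you have attempted is to settle an open question, and what you have written is a strategy rather than a proof.

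The gap is concrete and central, and you name it yourself (``the main obstacle''): the witness language $L=(ab)^+\cup R$ is never actually defined, because $R$ is left as an unspecified union of ``perturbation families.'' Until $R$ is fixed, neither half of the argument can be checked. For the inclusion $L\in\cEC(\SLT_2)$ the representation $(ab)^+=\slt{\sets{ab}}{\sets{ab,ba}}{\sets{ab}}{\emptyset}$ is fine, but whether each component of $R$ admits an $\SLT_2$ selection \emph{that does not also select unwanted subwords of other components or of $(ab)^+$} depends entirely on what $R$ is. For the exclusion $L\notin\cEC(\ORD)$, the two load-bearing claims --- that the ordered selection $S$ used to pump $(ab)^n$ must satisfy $S\cap R=\emptyset$, and that every ordered $S$ disjoint from $R$ and containing cofinitely many $(ab)^n$ inherits the non-orderable two-cycle of the minimal DFA of $(ab)^+$ --- are asserted, not proved, and the second one is essentially the entire content of the conjecture. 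Observe that $\sets{a,b}^*$ is ordered and contains every $(ab)^n$; to exclude it you must show that for \emph{every} context $(u,v)$ it could be paired with, some word of $L$ is mapped outside $L$, and this must be re-verified against each family you add to $R$. There is also an unaddressed tension between the two halves: infinite tail components such as $\set{(ab)^naa}{n\geq 1}$, which you propose in order to defeat definite and combinational overapproximations, are themselves typically generable via definite (hence ordered) selections of the form $\sets{a,b}^*B$, so enlarging $R$ tends to hand the hypothetical $\ORD$-grammar new usable selection/context pairs rather than eliminating them. Finally, your closing step argues about the \emph{minimal} DFA of $S$, whereas $\ORD$ is defined by the existence of \emph{some} ordered accepting DFA; the reduction to the minimal automaton is a known fact (Shyr--Thierrin) but must be invoked or proved. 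As it stands, the proposal is a reasonable research plan, not a proof.
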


Summarizing, we have the following result.

\begin{figure}[htb]
\centerline{%
\scalebox{0.75}{\begin{tikzpicture}[node distance=18mm and 34mm,on grid=true,
background rectangle/.style=
{
draw=black!80,
rounded corners=1ex},
show background rectangle]
\node (REG) {$\cEC(\REG)\stackrel{\cite{DasManTru12a}}{=}\cEC(\UF)$};
\node (PS) [below=of REG] {$\cEC(\PS)$};
\node (NC) [below=of PS] {$\cEC(\NC)$};
\node (d1) [below=of NC] {};
\node (d2) [below=of d1] {};
\node (ORD) [below=of d2] {$\cEC(\ORD)$};
\node (DEF) [below=of ORD] {$\cEC(\DEF)$};
\node (NIL) [below right=of DEF] {$\cEC(\NIL)$};
\node (COMB) [below=of DEF] {$\cEC(\COMB)$};
\node (MON) [below=of COMB] {$\cEC(\MON)$};
\node (FIN) [below=of MON] {$\cEC(\FIN)$};
\node (COMM) [right=of ORD] {$\cEC(\COMM)$};
\node (CIRC) [above=of COMM] {$\cEC(\CIRC)$};
\node (SLT1) [left=of DEF] {$\cEC(\SLT_1)$};
\node (SLT2) [above=of SLT1] {$\cEC(\SLT_2)$};
\node (SLTn) [above=of SLT2] {$\vdots$};
\node (SLT) [above=of SLTn] {$\cEC(\SLT)$};
\node (SUF) [left=of SLT2] {$\cEC(\SUF)$};
\draw[hier] (FIN) to node[edgeLabel]{\cite{Das05}} (MON);
\draw[hier] (MON) to node[edgeLabel]{\cite{Das-Analele15}} (COMB);
\draw[hier] (MON) [bend right=15]to node[edgeLabel]{\cite{Das05}} (NIL);
\draw[hier] (MON) [bend right=-29]to node[edgeLabel]{\cite{Das05}} (SUF);
\draw[hier] (NIL) to node[edgeLabel,near start]{\cite{Das05}} (DEF);
\draw[hier] (NIL) to node[edgeLabel]{\cite{Das05}} (COMM);
\draw[hier] (COMB) to node[edgeLabel]{\cite{Tru21-fi}} (DEF);
\draw[hiero] (ORD) to (NC);
\draw[hier] (DEF) to node[edgeLabel]{\cite{Tru14-ncma}} (ORD);
\draw[hier] (NC) to node[edgeLabel]{\cite{Tru21-fi}} (PS);
\draw[hier] (PS) to node[edgeLabel]{\cite{Tru21-fi}} (REG);
\draw[hier] (COMM) to node[edgeLabel]{\cite{DasManTru12a}} (CIRC);
\draw[hier] (CIRC) [bend right=25]to node[edgeLabel]{\cite{DasManTru12a}} (REG);
\draw[hier] (SUF) [bend right=-36]to node[edgeLabel]{\cite{Tru21-fi}} (PS);
\draw[hier] (COMB) to node[edgeLabel]{\cite{Das-Analele15}} (SLT1);
\draw[hier] (SLT1) to node[edgeLabel]{\cite{Das-Analele15}} (SLT2);
\draw[hier] (SLT2) to node[edgeLabel]{\cite{Das-Analele15}} (SLTn);
\draw[hier] (SLTn) to node[edgeLabel]{\cite{Das-Analele15}} (SLT);
\draw[hier] (SLT) to node[edgeLabel]{\cite{Das-Analele15}} (NC);
\draw[hier] (DEF) [bend right=-5]to node[edgeLabel]{\cite{Das-Analele15}} (SLT);
\draw[hier] (SLT1) to node[edgeLabel]{\ref{l-ec-slt1-ord}} (ORD);
\end{tikzpicture}}}
\caption{Hierarchy of language families by external contextual grammars with
selection languages defined by structural properties. An edge label refers to
the paper or lemma where the respective inclusion is proved.}\label{ec-fig-struct}
\end{figure}
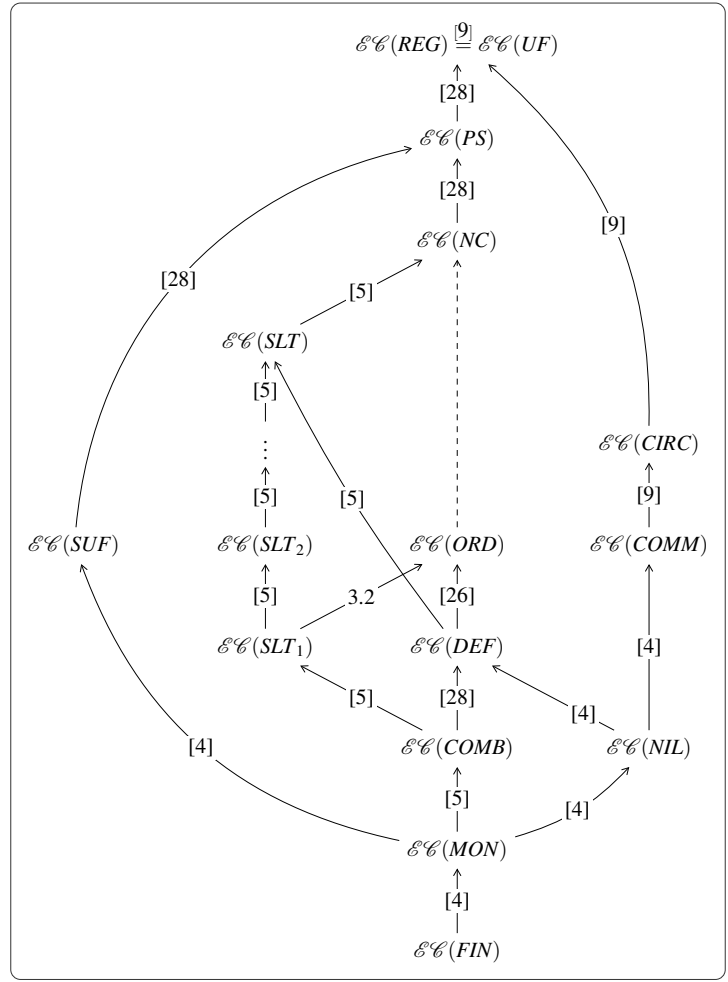

\begin{theorem}\lab{th-ec-struct-hier}
The inclusion relations presented in Figure~\ref{ec-fig-struct} hold.
An arrow from an entry~$X$ to an entry~$Y$ depicts the proper inclusion $X\subset Y$; the dashed arrow
from~$\cEC(\ORD)$ to~$\cEC(\NC)$ indicates that it is not known so far whether the inclusion 
is proper or whether equality holds.
With exception of the pairs~$(\cEC(\ORD),\cEC(\SLT))$ and~$(\cEC(\ORD),\cEC(\SLT_k))$ for $k\geq 2$, if two
families~$X$ and~$Y$ are not connected by a directed path, then~$X$ and~$Y$ are incomparable; in the exceptional
cases, $\cEC(\ORD)\not\subseteq \cEC(\SLT)$ and~$\cEC(\ORD)\not\subseteq \cEC(\SLT_k)$ for $k\geq 2$ hold.
\end{theorem}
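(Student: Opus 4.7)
The plan is to verify the assertions of the theorem by walking through Figure~\ref{ec-fig-struct} edge by edge and pair by pair. For the inclusions depicted by solid arrows, I would rely on the label attached to each arrow: most come directly from the cited papers \cite{Das05,Das-Analele15,Tru21-fi,DasManTru12a,Tru14-ncma}, while the one new solid arrow, $\cEC(\SLT_1)\subset\cEC(\ORD)$, is supplied by Lemma~\ref{l-ec-slt1-ord}. Transitivity of inclusion then gives all composite inclusions read off the diagram. The dashed arrow from $\cEC(\ORD)$ to $\cEC(\NC)$ records the known inclusion (which follows from $\ORD\subset\NC$ via Lemma~\ref{l-context-gramm-monoton}) together with the fact that properness is still open.

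For properness, I would note that in every case at least one of two things holds: either the original paper attached to the arrow already exhibits a separating language, or properness at the subregular level (Theorem~\ref{th-struct-hier}) can be lifted through Lemma~\ref{l-context-gramm-monoton} by re-using the witness languages from the cited references. In particular, the new proper inclusion $\cEC(\SLT_1)\subset\cEC(\ORD)$ is witnessed by the language $L_{\ORD,\neg\SLT}$ from Lemma~\ref{l-ec-35}. For the incomparability claims, I would first invoke \cite{Das05,Das-Analele15,Tru21-fi}, which between them cover all pairs not involving the $\SLT$-hierarchy, and then handle the remaining pairs involving $\cEC(\SLT_k)$ or $\cEC(\SLT)$: one direction comes from the separations established in \cite{Das-Analele15}, and for those pairs in which the missing direction is a non-inclusion of $\cEC(\ORD)$, I would appeal to Lemma~\ref{l-ec-35}, which gives $L_{\ORD,\neg\SLT}\in\cEC(\ORD)\setminus\cEC(\SLT)$ and, via monotonicity, also $\cEC(\ORD)\not\subseteq\cEC(\SLT_k)$ for every $k\geq 1$.

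Finally, the special status of the pairs $(\cEC(\ORD),\cEC(\SLT))$ and $(\cEC(\ORD),\cEC(\SLT_k))$ for $k\geq 2$ has to be stated carefully: Lemma~\ref{l-ec-35} settles one direction (the non-inclusion of $\cEC(\ORD)$), but the reverse direction is exactly the content of the preceding Conjecture, so these pairs cannot yet be declared incomparable; the theorem therefore singles them out. The main obstacle is not any single deep step but the bookkeeping: one must confirm that for every pair of nodes not joined by a directed path in Figure~\ref{ec-fig-struct} either a published incomparability proof exists or a new one has been supplied here, and that no unlisted inclusion sneaks in through the $\SLT$-hierarchy. Once this check is performed, the theorem follows by assembling the edge-label citations, Lemma~\ref{l-context-gramm-monoton}, Lemma~\ref{l-ec-35}, and Lemma~\ref{l-ec-slt1-ord}.
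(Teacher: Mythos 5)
Your proposal matches the paper's own treatment: the theorem is stated there as a summary with no separate proof, the inclusions being carried by the edge-label citations together with the new Lemma~\ref{l-ec-slt1-ord}, and the exceptional pairs being exactly the one-directional separation $\cEC(\ORD)\not\subseteq\cEC(\SLT)$ from Lemma~\ref{l-ec-35} whose converse is the stated Conjecture. Your bookkeeping of which incomparabilities come from \cite{Das05,Das-Analele15,Tru21-fi} and which pairs need the new lemmas is the same assembly the authors perform implicitly.
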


\subsection{Internal Contextual Grammars}

When we speak about contextual grammars in this subsection, we mean contextual grammars with
internal derivation (also called internal contextual grammars). A language of an internal
contextual grammar is a language which is internally generated.

In \cite{ManTru12}, such contextual grammars were investigated where the selection
languages belong to families~$\cF_n$ which are obtained by restriction to $n$ states or
$n$ non-terminal symbols, productions, or symbols to accept or to generate regular languages.
In \cite{DasManTru12b}, the research was continued and extended to the investigation of
the effect of finite, monoidal, nilpotent, combinational, definite, ordered, regular
commutative, regular circular, regular suffix-closed, and union-free selection languages
on the generative capacity of internal contextual grammars.
We consider here only subregular families defined by structural properties (not resources).

In contrast to the external derivation mode, contextual grammars can internally apply
a context infinitely often if it can be applied once. If a word contains a subword which
belongs to a selection language, also the word after inserting the context contains a
subword (namely the same as before) which belongs to this selection language. This difference
has as a consequence that finite selection languages not only yield finitely many words
as in the case of contextual grammars working in the external mode.
Another consequence is that `outer' parts of a word do not have to be added at the end
of the derivation process but can be produced at some time whereas `inner' parts can be
`blown up' later. For this reason, the results obtained for external contextual grammars
are not of much help here.

According to Lemma~\ref{l-context-gramm-monoton}, we have the inclusion
$\cIC(X)\subseteq \cIC(Y)$ whenever we have the proper inclusion $X\subset Y$
for two families of languages $X$ and $Y$.

We now present witness languages for proving the properness of the inclusions
\[\cIC(\COMB)\subset\cIC(\SLT_1)\subset \cIC(\SLT_2)\subset\cdots\subset \cIC(\SLT_k)\subset\cdots\subset\cIC(\SLT)\subset\cIC(\NC)\]
and $\cIC(\DEF)\subset\cIC(\SLT)$.

\begin{lemma}\lab{l-ic-32}
Let $L=\set{ac^nbd^n}{n\geq 0}$. Then $L\in \cIC(\SLT_1)\setminus \cIC(\COMB)$.
\end{lemma}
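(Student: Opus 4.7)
For the positive direction $L\in\cIC(\SLT_1)$, my plan is to exhibit a one-rule grammar
$G=(\{a,b,c,d\},\{(S,\{(c,d)\})\},\{ab\})$ with selection language $S=\{b,bb\}$. The language
$S$ lies in $\SLT_1$ via the representation $\slt{\{b\}}{\emptyset}{\{b\}}{\emptyset}$: the
empty interior set forbids any word of length three or more, leaving only $b$ and $bb$ as
admissible words. Since no word of $L$ contains $bb$ as a factor, each derivation step is
forced to take $x_2=b$, and wrapping the context $(c,d)$ around the unique $b$ of $ac^nbd^n$
produces $ac^{n+1}bd^{n+1}$. A straightforward induction on derivation length then yields
$L_{\mathrm{in}}(G)=L$.

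For the negative direction $L\notin\cIC(\COMB)$, I would argue by contradiction. Suppose
$L=L_{\mathrm{in}}(G)$ for some $G=(V,\cP,A)$ whose selection languages are combinational,
$S_i=U_i^{*}X_i$. Because $A$ is finite while $L$ is infinite, some rule must admit a
non-trivial derivation step $w'\Longrightarrow w$ with $w'=ac^Mbd^M$ and $w=ac^Nbd^N$, $N>M$,
via some $(u,v)\in C_i$. A shape analysis that relies only on $w,w'\in L$ then forces
$u\in\{c\}^{+}$, $v\in\{d\}^{+}$ with $|u|=|v|$, and the factor $x_2$ to straddle the unique
occurrence of $b$; hence $x_2=\alpha c^k b d^m$ for some $\alpha\in\{\lambda,a\}$ and
$k,m\geq 0$.

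The key step then exploits the closure property of combinational languages: along with $x_2$,
the language $S_i=U_i^{*}X_i$ contains every word obtained by left-concatenating elements of
$U_i^{*}$ with $x_2$, and, in particular, each single letter of $X_i$ belongs to $S_i$. A case
split on the last letter of $x_2$ produces contradictions. If $m\geq 1$, then $d\in X_i$, so
$d\in S_i$; re-applying the same rule to a long $ac^{M'}bd^{M'}$ with $x_2=d$ chosen inside the
$d$-block places $u\in\{c\}^{+}$ strictly to the right of $b$, producing a word outside $L$.
Analogous arguments eliminate $c\in X_i$ by taking a single $c$ inside the $c$-block, and, when
$a\in U_i$, eliminate the subcase $\alpha=a$ by choosing $x_2=ac^{M'}b$ so that $u$ is inserted
before the single $a$.

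The main obstacle I anticipate is the residual corner case in which $U_i\subseteq\{b,c,d\}$,
$a\notin U_i$ and $X_i=\{b\}$. Here the combinational constraint is so weak that, on each
individual word of $L$, the rule acts just like the $\SLT_1$ rule from Part~1, and none of the
single-letter closure arguments above can trigger a contradiction. To close this case I would
aggregate across all rules of $\cP$: each such rule contributes a balanced insertion pair
$(c^{p_i},d^{p_i})$ that is unconditionally applicable (because $b\in S_i$) to every word of
$L$, so the $c$-exponents of $L(G)$ form a finite union of translates of the numerical
semigroup generated by $\{p_i\}_i$. A careful numerical-semigroup and axiom-counting argument
would then, I expect, show that equality $L(G)=L$ cannot be achieved without admitting an
enlargement of some $X_i$ or $U_i$ at some derivation step, which returns us to one of the
earlier contradictions. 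This aggregation step is the technical heart of the argument and is
where I expect the hard work to lie.
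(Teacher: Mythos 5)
Your positive half is correct and is essentially the paper's own construction: the paper uses the selection language $\sets{b}^+=\slt{\sets{b}}{\sets{b}}{\sets{b}}{\emptyset}$ where you use $\sets{b,bb}$, and in both cases every context $(c,d)$ is forced to be wrapped around the unique occurrence of $b$, so $L_{\mathrm{in}}(G)=L$. The negative half, however, is not finished, and the part you have not finished is the whole point. The paper's argument consists of a single closure step that you only permit yourself ``when $a\in U_i$'': from $c^tbd^s\in C$ and the fact that a combinational language has the form $V^*X$ (membership determined solely by the last letter), it concludes at once that $ac^mbd^s\in C$, and then the derivation $ac^mbd^sd^{m-s}\Longrightarrow c^{n-m}ac^mbd^sd^{n-m}d^{m-s}=c^{n-m}ac^mbd^n$ yields a word outside $L$ because the $c$-part of the context lands in front of the initial $a$. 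Your cases ``last letter $d$'' and ``last letter $c$'' are fine and consistent with this, but they are not where the difficulty sits.

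The residual case you isolate ($a\notin U_i$, $X_i=\sets{b}$) deserves a blunt answer: the numerical-semigroup aggregation you propose cannot close it, because in that case there is no contradiction to be found. A rule with selection language $\sets{b,c,d}^*\sets{b}$ and context $(c,d)$, applied to $ac^nbd^n$, can only select a factor of the form $c^jb$ (the factor cannot reach past the $a$, and must end at the unique $b$), and hence always produces $ac^{n+1}bd^{n+1}$; together with the axiom $ab$ it generates exactly $L$. So if combinational selection languages over a proper sub-alphabet missing $a$ were admitted as you allow, the claim $L\notin\cIC(\COMB)$ would simply fail, and no counting of $c$-exponents will rescue it. The lemma is proved (and is only true) under the reading the paper actually uses, namely that the combinational selection language is $V^*X$ over the grammar's full alphabet, so that the selected factor can always be extended leftwards across the initial $a$. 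Replace your entire final paragraph by that single observation and the argument closes; as written, your proposal has a genuine gap, and the route you sketch for filling it is a dead end.
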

\begin{proof}
The internal contextual grammar
$(\sets{a,b,c,d}, \sets{(\sets{b}^+, \sets{(c,d)})}, \sets{ab})$
with the strictly locally 1-testable selection language $\sets{b}^+$
(which has a representation as $\slt{\sets{b}}{\sets{b}}{\sets{b}}{\emptyset}$) generates the language $L$.
Thus, $L\in \cIC(\SLT_1)$.

Assume that $L=L(G)$ for some internal contextual grammar $G$ with
combinational selection languages. Then, for sufficiently large $n$
(which is larger than the sum of the longest length of axioms in
$G$ and the maximum of $\vert uv\vert$ for contexts $(u,v)$ of
$G$), we have a derivation $x\Longrightarrow ac^nbd^n$. Because~$x\in L$ holds,
the used context $(\alpha,\beta)$ contains no
letter $a$ and no letter $b$ (otherwise, we can produce a word with
more than two occurrences of $a$ or $b$), we have $x=ac^mbd^m$,
$\alpha=c^{n-m}$, $\beta=d^{n-m}$, and the context is wrapped
around a subword $c^tbd^s$ for some numbers $t$ and $s$
with $m\geq t\geq 0$, $m\geq s\geq 0$.
Since the selection language $C$ is combinational, we get
$ac^mbd^s\in C$ by $c^tbd^s\in C$. Therefore, we have the
derivation $x=ac^mbd^sd^{m-s}\Longrightarrow \alpha ac^mbd^s\beta d^{m-s}=c^{n-m}ac^mbd^n$,
i.\,e., we can derive a word not in $L$.
Thus,~$L\notin \cIC(\COMB)$.
\end{proof}

\begin{lemma}\lab{l-ic-33}
Let $n$ be a natural number with $n\geq 2$ and
\[L_n=\set{a^mb^{2n}c^m}{m\geq n}\cup\sets{a^{n-1}b^nc^{n-1}}.\]
Then $L_n\in(\cIC(\SLT_n)\cap\cIC(\FIN))\setminus\cIC(\SLT_{n-1})$.
\end{lemma}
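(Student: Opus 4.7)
The plan is to prove three claims separately: $L_n\in\cIC(\FIN)$, $L_n\in\cIC(\SLT_n)$, and $L_n\notin\cIC(\SLT_{n-1})$. The first two are constructive; the third is the substantive part and will rest on a pumping argument for $\SLT_{n-1}$-selection languages.

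For both inclusions I will use the axioms $\{a^{n-1}b^nc^{n-1},\, a^nb^{2n}c^n\}$ and the single context $(a,c)$. The finite case takes the selection language $\{b^{2n}\}$: the axiom $a^{n-1}b^nc^{n-1}$ has no $b^{2n}$ factor and remains inert, while in $a^nb^{2n}c^n$ the unique $b^{2n}$ factor can be wrapped with $(a,c)$, yielding $a^{n+1}b^{2n}c^{n+1}$; iteration produces all $a^mb^{2n}c^m$ with $m\geq n$. For the $\SLT_n$ case, I will take $S_n=\{a^nb^kc^n : k\geq n-1\}$, realised as $\slt{\{a^n\}}{I_n}{\{c^n\}}{\emptyset}$ with $I_n=\{a^{n-r}b^r : 1\leq r\leq n-1\}\cup\{b^n\}\cup\{b^{n-r}c^r : 1\leq r\leq n-1\}$; a direct trace through the $n$-windows shows that this $\SLT_n$-description captures exactly $S_n$. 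Every word $a^mb^{2n}c^m$ with $m\geq n$ contains the unique factor $a^nb^{2n}c^n\in S_n$ of the form $a^nb^kc^n$, and $a^{n-1}b^nc^{n-1}$ admits no such factor (too few a's).

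For the separation, assume toward a contradiction that $L_n=L(G)$ for some internal contextual grammar $G$ whose selection languages all lie in $\SLT_{n-1}$. Since $G$ has finitely many contexts of bounded length, a single step from $a^{n-1}b^nc^{n-1}$ can produce only finitely many words of the form $a^mb^{2n}c^m$. Hence, for sufficiently large $m$, the last derivation step of $a^mb^{2n}c^m$ must have the form $y\Lra a^mb^{2n}c^m$ with $y=a^{m'}b^{2n}c^{m'}$ for some $n\leq m'<m$. A case analysis of how a non-trivial context $(u,v)$ can be adjoined around a selected factor $y_2$ of $y$ without disturbing the $a^*b^{2n}c^*$ shape forces $y_2=a^ib^{2n}c^k$ for some $0\leq i,k\leq m'$, together with $u=a^{m-m'}$ and $v=c^{m-m'}$.

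The heart of the argument is to pump the selected factor. Letting $i^*=\min(i,n-1)$ and $k^*=\min(k,n-1)$, I will show that $y_2^*=a^{i^*}b^nc^{k^*}$ still lies in the same selection language $S$ by verifying that its length-$(n-1)$ prefix and suffix coincide with those of $y_2$ and that every interior $(n-1)$-window of $y_2^*$ already occurs as an interior $(n-1)$-window of $y_2$. Since $y_2^*$ is a factor of $a^{n-1}b^nc^{n-1}\in L_n\subseteq L(G)$, adjoining $(a^{m-m'},c^{m-m'})$ around $y_2^*$ in this axiom produces the word $a^{n-1+m-m'}b^nc^{n-1+m-m'}$, which must then also belong to $L(G)=L_n$. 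But for $m>m'$ this has the form $a^jb^nc^j$ with $j>n-1$, and $L_n$ contains $a^jb^nc^j$ only for $j=n-1$ --- the desired contradiction. The main technical obstacle will be the bookkeeping in the pumping step, especially when $i<n-1$ or $k<n-1$ so that the prefix or suffix of $y_2$ already straddles the $a$-$b$ or $b$-$c$ boundary, where one must check carefully that $y_2^*$ inherits exactly the same prefix, suffix, and set of interior $(n-1)$-windows with respect to any $\SLT_{n-1}$-representation of $S$.
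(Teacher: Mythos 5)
Your proposal is correct and follows essentially the same route as the paper: the same two grammars for the positive claims (selection $\{b^{2n}\}$ for $\cIC(\FIN)$ and an $\SLT_n$-language of the form $a^nb^*c^n$ for $\cIC(\SLT_n)$, the paper simply taking $I=\{a,b,c\}^n$), and for the separation the same pumping idea of shrinking the $b$-block of the selected factor from $b^{2n}$ to $b^n$ via $b^{n-1}\in I$ and then firing the context inside $a^{n-1}b^nc^{n-1}$. Your truncation $i^*=\min(i,n-1)$, $k^*=\min(k,n-1)$ in fact spells out a detail the paper's proof passes over silently, namely that $a^pb^nc^q$ need not literally be a subword of $a^{n-1}b^nc^{n-1}$ unless the $a$- and $c$-blocks are first cut down to length at most $n-1$.
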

\begin{proof}
Let $n$ be a natural number with $n\geq 2$ and $L_n$ the language mentioned in the claim.
The language~$L_n$ is generated by the contextual grammar
\[(\Sets{a,b,c},\Sets{(\textbf{[}\Sets{a^n},\Sets{a,b,c}^n,\Sets{c^n}\textbf{]},\Sets{(a,c)})},\Sets{a^nb^{2n}c^n,a^{n-1}b^nc^{n-1}})\]
with a selection language from the family $\SLT_n$ and by
\[(\Sets{a,b,c},\Sets{(\sets{b^{2n}},\Sets{(a,c)})},\Sets{a^nb^{2n}c^n,a^{n-1}b^nc^{n-1}})\]
with a finite selection language.

The language $L_n$ is not generated by a contextual grammar where all selection languages belong to the family $\SLT_{n-1}$.
Assume the contrary. Since the subset $\set{a^mb^{2n}c^m}{m\geq n}$ of $L_n$ is infinite, there is a selection
language $S=\slt{B}{I}{E}{F}$ used with a word $a^pb^{2n}c^q$ for
two natural numbers $p\geq 0$ and $q\geq 0$. As $S\in\SLT_{n-1}$, we
have $b^{n-1}\in I$. Then also the word $a^pb^nc^q$ belongs to the selection language
which is a subword of the word $a^{n-1}b^nc^{n-1}\in L_n$.
Hence, another word with exactly $n$ letters $b$ would be generated which is a contradiction to the form
of the words in the language~$L_n$.
\end{proof}

\begin{lemma}\lab{l-ic-34}
Let $L=\Set{a^nb^mc^nd^m}{m\geq 1,\ n\geq 1}$.
Then $L\in \cIC(\SLT_1)\setminus \cIC(\DEF)$.
\end{lemma}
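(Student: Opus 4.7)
The plan has two parts. For the inclusion $L\in\cIC(\SLT_1)$, I exhibit the contextual grammar $G=(\{a,b,c,d\},\cP,\{abcd\})$ with the two selection/context pairs
\[\bigl(\slt{\{a\}}{\{a,b,c\}}{\{c\}}{\emptyset},\,\{(a,c)\}\bigr)\qmand\bigl(\slt{\{b\}}{\{b,c,d\}}{\{d\}}{\emptyset},\,\{(b,d)\}\bigr),\]
whose selection languages are strictly locally $1$-testable by construction. The key geometric observation is that any factor of $a^nb^mc^nd^m$ that starts with $a$, ends with $c$, and has middle letters only in $\{a,b,c\}$ must traverse the entire $b$-block and hence has the form $a^ib^mc^k$ with $i,k\ge 1$; wrapping $(a,c)$ around such a factor produces $a^{n+1}b^mc^{n+1}d^m$. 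An analogous statement for the second pair yields $a^nb^{m+1}c^nd^{m+1}$. Starting from $abcd$, an induction on the derivation length then gives $L(G)=L$.

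For $L\notin\cIC(\DEF)$, I argue by contradiction. Assume $L=L(G)$ for a contextual grammar $G=(V,\cP,A)$ with all selection languages definite, and fix an integer $\ell$ strictly larger than every axiom length, every context word length, and the length of every word appearing in any finite set $D_s$ or $D_e$ used to represent some selection language as $D_s\cup V^*D_e$. Since the selected subword $x_2$ is preserved by an insertion, any applicable context $(u,v)$ can be iterated, and $x_1u^kx_2v^kx_3\in L$ must hold for all $k\ge 0$. Because $L\subseteq a^*b^*c^*d^*$, iteration forces $u$ and $v$ each to be a power of a single letter, and comparing the four letter counts then forces every non-trivial applied context to have the form $(a^p,c^p)$ or $(b^p,d^p)$ with $p\ge 1$.

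Next, I pick $n,m>\ell$ and consider any derivation of $a^nb^mc^nd^m$. Its last non-trivial step is either of $(a^p,c^p)$-type, in which case the predecessor already contains the full $b^m$-block, or of $(b^q,d^q)$-type, in which case the predecessor already contains the full $c^n$-block. By symmetry I treat the first case. The selected subword then has the form $x_2=a^{n-p-i}b^mc^{n-p-j}$ with $|x_2|\ge m>\ell$, so $x_2\notin D_s$, and by definiteness of the corresponding selection language $S=D_s\cup V^*D_e$, some suffix of $x_2$ of length at most $\ell$ lies in $D_e$. Since $m>\ell$, such a suffix cannot reach the $a$-part of $x_2$; it is therefore either $c^s$ with $0\le s\le\ell$ or $b^tc^q$ with $t\ge 1$ and $t+q\le\ell$.

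The final step is to turn each such suffix into a derivable word outside $L$. If $\lambda\in D_e$, then $S=V^*$ and applying $(a^p,c^p)$ to $abcd\in L$ with $x_2=a$ yields $a^{p+1}c^pbcd\notin L$. If $c^s\in D_e$ for some $s\ge 1$, then $c^s\in S$ and applying $(a^p,c^p)$ around the unique $c^s$-factor of $a^sb^2c^sd^2\in L$ yields $a^sb^2a^pc^{s+p}d^2$, which contains the forbidden substring $ba$. If $b^tc^q\in D_e$ with $t\ge 1$, then $b^tc^q\in S$, and applying $(a^p,c^p)$ around its occurrence as a factor of $a^{\max(q,1)}b^{t+1}c^{\max(q,1)}d^{t+1}\in L$ produces a word containing the substring $ba^pb$, again outside $L$. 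In every case the grammar derives a word outside $L$, contradicting $L(G)=L$. The symmetric argument handles a $(b^q,d^q)$-last step. The main obstacle is the suffix case analysis in this final step, and the choice $n,m>\ell$ is what keeps it closed, by preventing short suffixes of $x_2$ from crossing an entire inner block.
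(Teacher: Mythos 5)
Your proof is correct and, on the harder direction, takes a genuinely different route from the paper's. For $L\in\cIC(\SLT_1)$ the paper uses the tighter selection languages $\sets{a}\sets{b}^*\sets{c}$ and $\sets{b}\sets{c}^*\sets{d}$, so the selected factor is literally $ab^mc$ resp.\ $bc^nd$; your larger languages $\sets{a}\sets{a,b,c}^*\sets{c}$ and $\sets{b}\sets{b,c,d}^*\sets{d}$ work equally well by your observation that any admissible factor of $a^nb^mc^nd^m$ must straddle the whole inner block. For $L\notin\cIC(\DEF)$ the paper argues more compactly: the selected word must be $a^{r'}b^sc^{r''}$ (or symmetrically $b^{s'}c^rd^{s''}$) with $s$ at least the bound $p$; its defining suffix of length at most $p$ is then also a suffix of $b^{s-1}c^{r''}$, so $b^{s-1}c^{r''}$ lies in the same definite selection language, and re-applying the same context inside the \emph{same} word pushes an $a$ into the $b$-block. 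You instead classify the short suffix lying in $D_e$ (namely $\lambda$, $c^s$, or $b^tc^q$) and transplant it into a freshly chosen word of $L$ where the insertion is bad; this costs a case analysis but makes the role of $D_e$ explicit, and your preliminary reduction of all usable contexts to $(a^p,c^p)$ or $(b^p,d^p)$ via iteration is a clean step the paper leaves implicit. One small repair: in the subcase $\lambda\in D_e$ you conclude $S=V^*$ and select $x_2=a$, but a selection language may be defined over a proper subalphabet $U$ not containing $a$ (the paper's definition permits this), in which case only $S=U^*$ is guaranteed; the subcase still closes by selecting $x_2=\lambda$ between $b$ and $c$ in $abcd$, which always lies in $U^*D_e$, and deriving $aba^pc^{p+1}d\notin L$.
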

\begin{proof}
The language $L$ can be generated by the contextual grammar
\[G=\left(\Sets{a,b,c,d},\Sets{(S_{ac},\Sets{(a,c)}),(S_{bd},\Sets{(b,d)})},\Sets{abcd}\right)\]
with the strictly locally 1-testable selection languages $S_{ac}=\sets{a}\sets{b}^*\sets{c}$
and $S_{bd}=\sets{b}\sets{c}^*\sets{d}$:
\[S_{ac} = \slt{\sets{a}}{\sets{b}}{\sets{c}}{\emptyset} \quad\mbox{ and }\quad
S_{bd} = \slt{\sets{b}}{\sets{c}}{\sets{d}}{\emptyset}.
\]
Assume that the language $L$ can be generated by a contextual grammar $G'$ with definite selection languages.
Let $S_i=A_i\cup V^*B_i$ for $1\leq i\leq q$ be the selection languages of $G'$. Further, let
\[p=\max \Setr{|w|}{w\in \bigcup_{i=1}^q (A_i\cup B_i)}.\]
Since the language $L$ is infinite and the number of the letters $a$ and $b$ are unbounded in its words, there is a
word $a^rb^sc^rd^s\in L$ with $r\geq p$ and $s\geq p$ such that from this word another one is generated. Hence,
there is a selection language $S_i$ with $1\leq i\leq q$ which contains a word which is a subword of $a^rb^sc^rd^s$.
This word is $a^{r'}b^sc^{r''}$ with $1\leq r'\leq r$ and $1\leq r''\leq r$
or $b^{s'}c^rd^{s''}$ with $1\leq s'\leq s$ and $1\leq s''\leq s$
in order to maintain the form of the words of the language. Since $S_i$ is definite and $s-1+r''\geq p$ or $r-1+s''\geq p$,
the word~$b^{s-1}c^{r''}$ or~$c^{r-1}d^{s''}$ also belongs to the selection language $S_i$. But then a letter $a$ would be
inserted inside the $b$-block or a letter $b$ would be inserted inside the $c$-block. In both cases, a word would be
generated which does not belong to the language $L$. Therefore, the language $L$ cannot be generated by a contextual grammar
with definite selection languages.
\end{proof}

\begin{lemma}\lab{l-ic-35}
Let $L=\set{a^{p_1}ba^{p_2}ba^{p_3+p_1}ba^{p_2}ba^{p_3}}{p_i\geq 1,\ 1\leq i\leq 3}$.
Then $L\in\cIC(\ORD)\setminus\cIC(\SLT)$.
\end{lemma}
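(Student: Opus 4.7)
The plan is to handle the two halves of the lemma separately: $L\in\cIC(\ORD)$ via an explicit ordered grammar, and $L\notin\cIC(\SLT)$ by deriving a contradiction from any hypothetical $\SLT$-grammar through a pumping argument.

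For $L\in\cIC(\ORD)$ I would exhibit the grammar
\[G=\bigl(\{a,b\},\bigl\{(\{a\}^*\{b\}\{a\}^*\{b\}\{a\}^*,\{(a,a)\})\bigr\},\{ababaababa\}\bigr),\]
whose axiom corresponds to $(p_1,p_2,p_3)=(1,1,1)$ and whose selection language $S=\{a\}^*\{b\}\{a\}^*\{b\}\{a\}^*$ is accepted by the four-state DFA that counts occurrences of $b$ up to three; the canonical linear order on its states makes each letter act monotonously, so $S\in\ORD$. Because every $L$-word has exactly four $b$'s, every two-$b$ subword of such a word spans three consecutive $a$-blocks, namely blocks 1--3, 2--4, or 3--5, and wrapping $(a,a)$ around such a subword adds one $a$ to each outer block of the span. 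A case check then verifies that in all three cases the resulting word still fits the shape $a^{p_1}ba^{p_2}ba^{p_1+p_3}ba^{p_2}ba^{p_3}$, namely with the corresponding parameter $p_1$, $p_2$, or $p_3$ increased by one, and from $(1,1,1)$ any target $(p_1,p_2,p_3)$ is reached by applying the three move types the appropriate number of times; hence $L(G)=L$.

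For $L\notin\cIC(\SLT)$ I would assume a contextual grammar $G'$ with selection languages in $\SLT_k$ generates $L$ and derive a contradiction. A preparatory analysis shows that every context lies in $\{a\}^*\times\{a\}^*$ (since the four occurrences of $b$ must be preserved along every derivation), and going through the possibilities $0,1,3,4$ for the number of $b$'s in the selected subword $x_2$ gives immediate violations of the rigid shape of $L$-words; only the case of exactly two $b$'s is admissible, and then the equations characterising $L$ force the context to be $(a^s,a^s)$ for some $s\geq 1$. I would then pick a word $w=a^nba^nba^{2n}ba^nba^n\in L$ for $n$ much larger than $k$ and the maximal context length, consider an immediate predecessor $w'\in L$ of $w$ in a derivation, and study the selection language $S\in\SLT_k$ used in the step $w'\Longrightarrow w$. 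Exploiting the fact that $S$ is determined by its length-$k$ prefix, suffix, and interior factor sets, I would first replace the actual two-$b$ subword in $S$ by a variant $a^{k}ba^{\beta}ba^{\alpha_3}\in S$ in which the outer $a$-blocks are long enough and $\alpha_3$ equals the length of the entire $a$-block sitting between the two $b$'s used in the step; then, because appending $ba^{k}$ only introduces $k$-factors already occurring around the other two $b$'s, the extended word $a^{k}ba^{\beta}ba^{\alpha_3}ba^{k}$ also lies in $S$ and occurs as a three-$b$ subword of $w'$ spanning four consecutive $a$-blocks of $w'$. Wrapping the context $(a^s,a^s)$ around this three-$b$ subword increases the two $a$-blocks at its ends; a short computation then shows that the resulting word violates at least one of the defining equations of $L$, either the equality of the two copies of $p_2$ or the equation $p_1+p_3=\text{middle}$, and therefore lies outside $L$, contradicting $L(G')=L$.

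The main obstacle I expect is carrying out the pumping in the last paragraph uniformly across the three possible spans (1--3, 2--4, or 3--5) of the original two-$b$ subword, and ensuring that for each of these spans the required replacement subword with all outer $a$-blocks of length at least $k$ actually occurs in $S$ and can be placed inside $w'$. Choosing $n$ substantially larger than $k$ gives enough room in every $a$-block of $w'$ to make the replacements legal, and the three span types then yield symmetric arguments that each produce a derivable word outside $L$.
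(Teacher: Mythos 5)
Your grammar for $L\in\cIC(\ORD)$ is the same as the paper's and is fine. The problem lies in the second half, in the step where you ``replace the actual two-$b$ subword in $S$ by a variant $a^{k}ba^{\beta}ba^{\alpha_3}\in S$ in which the outer $a$-blocks are long enough.'' This normalization is not available for strictly locally $k$-testable languages: if $S=\slt{B}{I}{E}{F}$ contains $a^qba^rba^s$ with small $q$ and $s$, then $B$ need only contain $a^qba^{k-1-q}$ and $E$ only $a^{k-1-s}ba^s$, not $a^k$, so no word of $S$ need have long outer $a$-blocks at all. Without outer blocks of length at least $k$ you cannot conclude that all one-$b$ patterns $a^iba^{k-1-i}$ lie in the interior set $I$, and therefore neither your extended word $a^{k}ba^{\beta}ba^{\alpha_3}ba^{k}$ nor the paper's pumped word $a^qba^rba^rba^s$ can be shown to belong to $S$. (The paper's own proof makes the same tacit assumption at exactly this point.)

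The gap is not repairable, because the claim $L\notin\cIC(\SLT)$ is false. Consider the internal contextual grammar
\[
G''=\bigl(\{a,b\},\ \bigl\{\bigl(\set{ba^rb}{r\geq 1},\ \{(a,a)\}\bigr)\bigr\},\ \{ababaababa\}\bigr).
\]
The selection language $\set{ba^rb}{r\geq 1}$ equals $\slt{\{ba\}}{\{aa\}}{\{ab\}}{\emptyset}$ and hence belongs to $\SLT_2$; since its words begin and end with $b$ and the interior set contains no factor with a $b$, no word with three occurrences of $b$ passes the local tests, which is precisely why the pumping step collapses for this $S$. In a word $a^{p_1}ba^{p_2}ba^{p_1+p_3}ba^{p_2}ba^{p_3}$ the only factors lying in this selection language are the three segments between consecutive $b$'s, and wrapping $(a,a)$ around them increases $p_1$, $p_2$, or $p_3$ by one, respectively: the inserted $a$'s fall into the adjacent blocks, which is exactly how the constraints ``middle block $=p_1+p_3$'' and ``blocks two and four are equal'' are preserved. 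Starting from the axiom $ababaababa$ (the case $p_1=p_2=p_3=1$), one obtains exactly $L$, so $L\in\cIC(\SLT_2)\subseteq\cIC(\SLT)$. Any argument for $L\notin\cIC(\SLT)$ must therefore fail somewhere; in both your proposal and the paper the failure is the assumption that an infinite $\SLT$ selection language used to pump the inner blocks must also admit words whose outer $a$-blocks are long or can be lengthened.
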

\begin{proof}
The language $L$ can be generated by the contextual grammar
\[\left(\Sets{a,b},\Sets{(\sets{a}^*\sets{b}\sets{a}^*\sets{b}\sets{a}^*,\Sets{(a,a)})},
          \sets{ababaababa}\right)\]
where the selection language is ordered since it is accepted by a deterministic finite automaton where the transition
function is given by the following table (the order is $z_0\preceq z_1\preceq z_2\preceq z_3$,
start state is $z_0$, accepting state is~$z_2$):
\[\begin{array}{c|cccc}
\delta & z_0 & z_1 & z_2 & z_3\\\hline
a      & z_0 & z_1 & z_2 & z_3\\
b      & z_1 & z_2 & z_3 & z_3\\
\end{array}\]

Assume that the language $L$ can be generated by a contextual grammar with strictly locally testable selection languages.
The length of each $a$-block is unbounded. Therefore, there is an infinite selection
language $S\subseteq\sets{a}^*\sets{b}\sets{a}^*\sets{b}\sets{a}^*$ used where the length of the $a$-block between the two
letters $b$ is unbounded and which has a context $(a^l,a^l)$ associated to it (otherwise,
a word would be generated which has not the required form of the words of the language $L$).
If $S$ is a strictly locally $k$-testable language, then it contains with a word~$a^qba^rba^s$
with $q\geq 0$, $r\geq k$, and $s\geq 0$ also the word~$a^qba^rba^rba^s$. Adding the context $(a^l,a^l)$ around
such a subword of a word of $L$ would yield a word which does not belong to the language $L$ (a word with a wrong format).
This contradiction implies that~$L\notin\cIC(\SLT)$.
\end{proof}

We now prove the proper inclusions mentioned above.

\begin{theorem}\lab{l-ic-th1}
The relations
\[\cIC(\COMB)\subset\cIC(\SLT_1)\subset \cIC(\SLT_2)\subset\cdots\subset \cIC(\SLT_k)\subset\cdots\subset\cIC(\SLT)\subset\cIC(\NC)\]
and $\cIC(\DEF)\subset\cIC(\SLT)$ hold.
\end{theorem}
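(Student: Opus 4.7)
The plan is to assemble this theorem from the four witness lemmas just proved (Lemmas~\ref{l-ic-32}, \ref{l-ic-33}, \ref{l-ic-34}, \ref{l-ic-35}), combined with the subregular hierarchy of Theorem~\ref{th-struct-hier} and the monotonicity principle of Lemma~\ref{l-context-gramm-monoton}. All non-strict inclusions in the claimed chain are immediate: from $\COMB\subset\SLT_1\subset\SLT_2\subset\cdots\subset\SLT\subset\NC$ and $\DEF\subset\SLT$ in Theorem~\ref{th-struct-hier}, monotonicity at once yields $\cIC(\COMB)\subseteq\cIC(\SLT_1)\subseteq\cdots\subseteq\cIC(\SLT)\subseteq\cIC(\NC)$ as well as $\cIC(\DEF)\subseteq\cIC(\SLT)$.

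For properness, I would proceed inclusion by inclusion. The first step $\cIC(\COMB)\subsetneq\cIC(\SLT_1)$ is witnessed directly by the language from Lemma~\ref{l-ic-32}. For the infinite ascending chain of the $\cIC(\SLT_k)$, I would apply Lemma~\ref{l-ic-33} with parameter $n=k+1$: the language $L_{k+1}$ lies in $\cIC(\SLT_{k+1})\setminus\cIC(\SLT_k)$, which simultaneously gives the consecutive strict inclusion $\cIC(\SLT_k)\subsetneq\cIC(\SLT_{k+1})$ and, because $\SLT_{k+1}\subseteq\SLT$ and hence $\cIC(\SLT_{k+1})\subseteq\cIC(\SLT)$, also the strict inclusion $\cIC(\SLT_k)\subsetneq\cIC(\SLT)$ for every $k\geq 1$. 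The side inclusion $\cIC(\DEF)\subsetneq\cIC(\SLT)$ is witnessed by the language from Lemma~\ref{l-ic-34}.

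For the topmost strict inclusion $\cIC(\SLT)\subsetneq\cIC(\NC)$, I would appeal to Lemma~\ref{l-ic-35}, which produces a language in $\cIC(\ORD)\setminus\cIC(\SLT)$. Since $\ORD\subset\NC$ by Theorem~\ref{th-struct-hier}, monotonicity gives $\cIC(\ORD)\subseteq\cIC(\NC)$, so this same witness lies in $\cIC(\NC)\setminus\cIC(\SLT)$, as required.

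Since all the genuine combinatorial work has already been discharged in the witness lemmas, no step of this argument should present a real obstacle. The only point worth stating carefully is that the single language $L_{k+1}$ from Lemma~\ref{l-ic-33} separates $\cIC(\SLT_k)$ simultaneously from $\cIC(\SLT_{k+1})$ and from $\cIC(\SLT)$; this is immediate because every selection language in $\SLT_{k+1}$ is automatically in $\SLT$, so the concrete grammar witnessing $L_{k+1}\in\cIC(\SLT_{k+1})$ already witnesses $L_{k+1}\in\cIC(\SLT)$.
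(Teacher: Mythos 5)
Your proposal is correct and follows essentially the same route as the paper: non-strict inclusions from the subregular hierarchy plus Lemma~\ref{l-context-gramm-monoton}, properness of the first step from Lemma~\ref{l-ic-32}, of the $\SLT_k$-chain from Lemma~\ref{l-ic-33} (with the index shift $n=k+1$), of $\cIC(\DEF)\subset\cIC(\SLT)$ from Lemma~\ref{l-ic-34} via $\cIC(\SLT_1)\subseteq\cIC(\SLT)$, and of $\cIC(\SLT)\subset\cIC(\NC)$ from Lemma~\ref{l-ic-35} via $\cIC(\ORD)\subseteq\cIC(\NC)$. No gaps.
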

\begin{proof}
The inclusions follow from the inclusions of the underlying language families (see \cite{McNPap71,SCR-PSP-11}) and
Lemma~\ref{l-context-gramm-monoton}. The properness is shown by the witness languages in the previous lemmas:
\begin{align*}
\cIC(\COMB)&\subset\cIC(\SLT_1) \mbox{ according to Lemma~\ref{l-ic-32},}\\
\cIC(\SLT_1)&\subset \cIC(\SLT_2)\subset\cdots\subset \cIC(\SLT_k)\subset\cdots\subset\cIC(\SLT)\mbox{ due to Lemma~\ref{l-ic-33},}\\
\cIC(\SLT)&\subset\cIC(\NC) \mbox{ according to Lemma~\ref{l-ic-35}, since $\cIC(\ORD)\subseteq\cIC(\NC)$ (\cite{Tru21-fi}),}\\
\cIC(\DEF)&\subset\cIC(\SLT) \mbox{ due to Lemma~\ref{l-ic-34}, since $\cIC(\SLT_1)\subset\cIC(\SLT)$.}
\end{align*}
\end{proof}

\begin{lemma}\label{l-ic-slt1-ord}
The proper inclusion $\cIC(\SLT_1)\subset\cIC(\ORD)$ holds.
\end{lemma}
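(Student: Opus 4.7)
My plan is to derive this lemma from two results already available in the paper, mirroring the external case treated in Lemma~\ref{l-ec-slt1-ord}. The whole argument is purely assembly: no new grammar construction or pumping analysis is required.

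For the inclusion $\cIC(\SLT_1)\subseteq\cIC(\ORD)$, I would invoke Lemma~\ref{l-ord}, which gives $\SLT_1\subset\ORD$, and combine it with Lemma~\ref{l-context-gramm-monoton}, which propagates any inclusion of underlying selection-language families to the corresponding $\cIC$-families. This immediately yields $\cIC(\SLT_1)\subseteq\cIC(\ORD)$, with no further work.

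For the properness, I would reuse the witness language
\[L=\set{a^{p_1}ba^{p_2}ba^{p_3+p_1}ba^{p_2}ba^{p_3}}{p_i\geq 1,\ 1\leq i\leq 3}\]
from Lemma~\ref{l-ic-35}, which already satisfies $L\in\cIC(\ORD)\setminus\cIC(\SLT)$. Since $\SLT_1\subseteq\SLT$, a second application of Lemma~\ref{l-context-gramm-monoton} gives $\cIC(\SLT_1)\subseteq\cIC(\SLT)$, so $L\notin\cIC(\SLT_1)$. Together with $L\in\cIC(\ORD)$, this establishes the strict separation.

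I do not expect any real obstacle: the key observation is simply that a witness separating $\cIC(\ORD)$ from the larger family $\cIC(\SLT)$ automatically separates it from the smaller subfamily $\cIC(\SLT_1)$, so Lemma~\ref{l-ic-35} does double duty here. The only conceptual point to flag is that the proper inclusion in Lemma~\ref{l-ord} is used only in its set-theoretic (non-strict) direction for the upper bound; the strictness of $\cIC(\SLT_1)\subset\cIC(\ORD)$ is secured independently through the witness language.
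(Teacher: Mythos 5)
Your proposal is correct and follows essentially the same route as the paper: the inclusion is obtained from Lemma~\ref{l-ord} together with Lemma~\ref{l-context-gramm-monoton}, and the properness comes from the witness language of Lemma~\ref{l-ic-35}, which lies in $\cIC(\ORD)\setminus\cIC(\SLT)$ and hence outside the subfamily $\cIC(\SLT_1)$. Your explicit remark that a witness against the larger family $\cIC(\SLT)$ automatically works against $\cIC(\SLT_1)$ is exactly the step the paper leaves implicit.
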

\begin{proof}
The inclusion $\cIC(\SLT_1)\subseteq\cIC(\ORD)$ follows from Lemma~\ref{l-ord} and Lemma~\ref{l-context-gramm-monoton}.
A witness language for the properness is the language $L\in\cIC(\ORD)\setminus\cIC(\SLT)$ from Lemma~\ref{l-ic-35}.
\end{proof}

The incomparabilities of the families $\cIC(\COMM)$ and $\cIC(\CIRC)$
with the families $\cIC(\SLT_k)$ for $k\geq 1$ and $\cIC(\SLT)$ follow from
the incomparabilities of the sets $\cIC(\COMM)$ and $\cIC(\CIRC)$ with the sets~$\cIC(\COMB)$ and $\cIC(\NC)$
shown in \cite{Tru21-fi}, since
\[\cIC(\COMB)\subseteq\cIC(\SLT_1)\subseteq\cIC(\SLT_2)\subseteq\cdots\subseteq\cIC(\SLT)\subseteq\cIC(\NC).\]

Regarding $\cIC(\SUF)$, we know that there is a language in the set $\cIC(\COMB)\setminus\cIC(\SUF)$ (\cite{DasManTru12b})
which also belongs to each set $\cIC(\SLT_k)$ for $k\geq 1$ and $\cIC(\SLT)$ due to the inclusion relations. However,
it is still open whether there is a language in the set $\cIC(\SUF)\setminus\cIC(\NC)$ (which would not belong to subsets
of $\cIC(\NC)$ either). So, we cannot use the method as for the classes~$\cIC(\COMM)$ and~$\cIC(\CIRC)$.

In the sequel, we show that, for every number $k\geq 1$, there is a language which belongs to the set~$\cIC(\SUF)$
but not to~$\cIC(\SLT_k)$.
We first note that the internal contextual grammar
\[(\sets{c,d}, \sets{(\sets{c,d}^*, \sets{(c,d)})}, \sets{\lambda})\]
generates the Dyck language $D$ over $\{c,d\}$.
For $k\geq 1$, we set
\begin{align*}
K_k' &= \set{c^{m_0}ac^{m_1}a\dots c^{m_k}ac^{m_{k+1}}bd^{m_0+m_1+\dots +m_{k+1}}}{m_i\geq 0,\ 0\leq i\leq k+1}, \\
K_k''&= K_k' \cup \{c^ka^{2k-1}\}K_k'\sets{d^k},
\end{align*}
and define $K_k$ as the language obtained from $K_k''$ by inserting a word of $D$ at any position.

\begin{lemma}
For all $k\geq 1$, we have $K_k\in \cIC(\SUF)\setminus \cIC(\SLT_k)$.
\end{lemma}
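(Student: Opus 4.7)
The plan splits into proving the two containments $K_k \in \cIC(\SUF)$ and $K_k \notin \cIC(\SLT_k)$.

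For the positive direction I would construct an explicit internal contextual grammar, imitating the grammar for $D$ given just before the lemma. Take the axiom set $\{a^{k+1}b,\, c^k a^{3k} b d^k\}$ (the minimal representatives of the two kinds of words in $K_k''$) together with the two selection--context pairs $(\{c,d\}^*, \{(c,d)\})$, used for free Dyck insertion ($\{c,d\}^*$ is trivially suffix-closed), and $(S_b, \{(c,d)\})$, where $S_b$ is the suffix closure of the regular language $\{\alpha b : \alpha\in\{a,c\}^*,\ \#_a(\alpha)\le k+1\}$. Each wrap using $(S_b,(c,d))$ adjoins one $c$ somewhere in the $c$/$a$-prefix ending at $b$ and one $d$ immediately after $b$, which corresponds exactly to incrementing one parameter $m_i$ of the $K_k'$-structure together with the final $d$-count; the condition $\#_a(\alpha)\le k+1$ prevents the $a^{2k-1}$-block of second-type words from being widened. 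One then checks that every wrap preserves $K_k$-membership and that every $w\in K_k$ can be reduced to one of the axioms by inverse applications.

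For the negative direction, assume $K_k = L_{\mathrm{in}}(G)$ for some $G = (V,\mathcal{P},A)$ whose selection languages all belong to $\SLT_k$. Consider the family $T_n = c^k a^{2k-1} c^n a^{k+1} b d^{n+k} \in K_k''$ for $n\ge 0$. For $n$ larger than any axiom length, deriving $T_n$ requires at least one non-trivial context application, and each intermediate word must itself lie in $K_k$ (carrying the balanced Dyck $(c,d)$-skeleton and the fixed $a$/$b$-profile). A case analysis on the context shapes shows that the step which contributes letters to both the middle $c$-block and the terminal $d$-block of $T_n$ can be taken to use a context $(u,v)=(c^p,d^p)$ with $p\ge 1$ and a wrap subword $x_2\in S$ of the form $c^j a^{k+1}bd^{j'}$ (or, in the degenerate case $j=0$, of the form $a^{k+1}bd^{j'}$); for $n$ large enough we may take $j,j'\ge k$. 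Pigeonholing on the finite $\mathcal{P}$ then fixes one such pair $(S,(c^p,d^p))$ for infinitely many $n$.

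The set of $k$-grams of $x_2$, together with its length-$k$ prefix and suffix, coincides with that of every word $c^{j''}a^m bd^{j'''}$ with $j'',j'''\ge k$ and $m\ge k+1$, so by the $\SLT_k$-characterization every such word lies in $S$; in particular $a^{3k}bd^k\in S$ (and $c^k a^{3k}bd^k\in S$ if $j\ge k$). Since $c^k a^{3k}bd^k\in K_k''\subseteq L(G)$, wrapping $(c^p,d^p)$ around this subword of the word $c^k a^{3k}bd^k$ itself produces $c^{k+p}a^{3k}bd^{k+p}\in L(G)$. But $c^{k+p}a^{3k}bd^{k+p}\notin K_k$ for $p\ge 1$: the only words of $K_k$ whose $a$s and $b$ spell $a^{3k}b$ contiguously arise from the second-type word $c^k a^{3k}bd^k$ by inserting contiguous Dyck words, and such balanced insertions cannot simultaneously extend the initial $c$-run and the terminal $d$-run by $p$ letters each while leaving the intermediate block $a^{3k}b$ unperturbed. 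This contradicts $L(G)=K_k$.

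The hardest part is the case analysis identifying the desired derivation step. Contexts containing $a$s or $b$s can be used only boundedly often in the derivation of $T_n$ (since the $a$/$b$-profile is fixed), and wrap subwords lying entirely in a single $c$-, $a$-, or $d$-block are each ruled out by a $K_k$-preservation argument (they would force the context to violate either Dyck balance or the $a$-block bound of $K_k$), leaving the configuration used in the argument above. The combinatorial claim $c^{k+p}a^{3k}bd^{k+p}\notin K_k$ is routine but must still be verified by analysing all possible positions of Dyck insertions into $c^k a^{3k}bd^k$.
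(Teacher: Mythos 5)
Your overall strategy coincides with the paper's: exhibit an explicit suffix-closed grammar for $K_k\in\cIC(\SUF)$, and for the negative part locate a derivation step whose context is $(c^p,d^p)$ and whose selected word has the shape $c^ja^{k+1}bd^{j'}$, use the $k$-window characterization of $\SLT_k$ to force the selection language to contain the corresponding word with $a^{3k}$ in place of $a^{k+1}$, and then misapply that context inside $c^ka^{3k}bd^k$ to leave $K_k$. Your positive construction (an infinite suffix-closed language $S_b$ plus a separate monoidal pair for the Dyck insertions) differs from the paper's single finite selection language $\{a^rb : 0\le r\le k+1\}\cup\{\lambda\}$ with context $(c,d)$, but it is equally legitimate.

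Two steps in your negative direction do not hold as stated. First, by targeting $T_n=c^ka^{2k-1}c^na^{k+1}bd^{n+k}$, which has $3k$ letters $a$, you admit the possibility that a context contains exactly $2k-1$ letters $a$ (turning a word with $k+1$ letters $a$ into one with $3k$), for instance the context $(c^ka^{2k-1},d^k)$ wrapped around $c^na^{k+1}bd^n$; the remark that such contexts are "used only boundedly often" does not exclude that this single step is precisely the one connecting the middle $c$-block to the $d$-block, so your case analysis needs an additional counting argument (all but $O(1)$ of the $n$ letters of the middle $c$-block must be contributed by $a$-free contexts, and by the balance $\#_c(uv)=\#_d(uv)$ any such contribution forces $v=d^{\#_c(u)}$ in the $d$-block). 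The paper sidesteps this entirely by targeting $c^na^{k+1}bd^n$, whose $a$-count $k+1$ is minimal among words of $K_k$, so a context containing $a$ or $b$ immediately forces the predecessor out of $K_k$. Second, the claim that "for $n$ large enough we may take $j,j'\ge k$" is unjustified and false in general: a grammar may insert every $c$ immediately adjacent to the $a$-block and every $d$ immediately after $b$, so that every relevant step has $j=j'=0$, and your window argument is only formulated for $j,j'\ge k$. The paper repairs exactly this point by passing to $k'=\min\{k,j\}$ and $k''=\min\{k,j'\}$, showing $c^{k'}a^{3k}bd^{k''}$ lies in the selection language, and wrapping the context around this factor of $c^ka^{3k}bd^k$ in the position $c^{k-k'}\cdot c^{k'}a^{3k}bd^{k''}\cdot d^{k-k''}$; with that modification your endgame (deriving $c^{k+p}a^{3k}bd^{k+p}\notin K_k$) goes through.
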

\begin{proof}
The internal contextual grammar
\[(\sets{a,b,c,d}, \sets{(\set{a^rb}{0\leq r\leq k+1}\cup \sets{\lambda}, \sets{(c,d)})}, \sets{a^{k+1}b, c^ka^{3k}bd^k}\]
with a suffix closed selection language generates the language $K_k$. Thus, $K_k\in\cIC(\SUF)$.

Assume that $K_k=L(G)$ for some internal contextual grammar $G$ where all selection languages are strictly locally
$k$-testable. Let $n$ be sufficiently large. Then there is a derivation $x\Longrightarrow c^na^{k+1}bd^n\in K_k$.
Since $x\in K_k$ holds, the used context $(\alpha,\beta)$ contains no letter $a$ and no letter $b$.
We have $x=c^ma^{k+1}bd^m$, $\alpha=c^{n-m}$, $\beta=d^{n-m}$, and the context is wrapped around
a subword $c^ta^{k+1}bd^s$ for some numbers $t$ and $s$ with $m\geq t\geq 0$ and $m\geq s\geq 0$.

Let $t\geq k$. Since the strictly locally $k$-testable
selection language $C$ which is used contains the word~$c^ta^{k+1}bd^s$, it also contains the word $c^ka^{3k}bd^s$.
Analogously, if $s\geq k$, the selection language $C$ also contains the word $c^ta^{3k}bd^k$.
Let $k'=\min\sets{k,t}$ and $k''= \min\sets{k,s}$. Then we have $c^{k'}a^{3k}bd^{k''}\in C$.
Hence, we have the derivation
\[c^ka^{3k}bd^k\Longrightarrow c^{k-k'}\alpha c^{k'}a^{3k}bd^{k''}\beta d^{k-k''}=c^{k+n-m}a^{3k}bd^{k+n-m}\]
which produces a word not in $K_k$.
Therefore, $K_k\notin {\cIC}(\SLT_k)$.
\end{proof}

Together with the result $\cIC(\COMB)\setminus\cIC(\SUF)\not=\emptyset$ (recalled from \cite{DasManTru12b}) which
also implies that~$\cIC(\SLT_k)\setminus\cIC(\SUF)\not=\emptyset$ for all $k\geq 1$, we obtain the following
incomparability result.

\begin{lemma}\label{l-ic-incomp+sltk-suf}
The families $\cIC(\SLT_k)$ for $k\geq 1$ are incomparable to the family $\cIC(\SUF)$.
\end{lemma}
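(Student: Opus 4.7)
The plan is to establish incomparability of $\cIC(\SLT_k)$ and $\cIC(\SUF)$ for each $k\geq 1$ by exhibiting a language in each of the two set differences. Both witnesses will be obtained by reusing constructions already available: one from the lemma immediately preceding this statement, the other from earlier literature combined with the monotonicity principle (Lemma~\ref{l-context-gramm-monoton}).

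For the direction $\cIC(\SUF)\not\subseteq\cIC(\SLT_k)$, I would simply invoke the language $K_k$ constructed just above, for which it was shown that $K_k\in\cIC(\SUF)\setminus\cIC(\SLT_k)$. This takes care of one half of the incomparability for every parameter $k\geq 1$.

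For the direction $\cIC(\SLT_k)\not\subseteq\cIC(\SUF)$, I would recall from \cite{DasManTru12b} that $\cIC(\COMB)\setminus\cIC(\SUF)\neq\emptyset$; let $L$ denote such a witness. By Lemma~\ref{l-comb-slt1}, we have $\COMB\subseteq\SLT_1$, and by the $\SLT_k$-hierarchy we have $\SLT_1\subseteq\SLT_k$ for all $k\geq 1$. Applying Lemma~\ref{l-context-gramm-monoton} twice yields $\cIC(\COMB)\subseteq\cIC(\SLT_1)\subseteq\cIC(\SLT_k)$, so that $L\in\cIC(\SLT_k)\setminus\cIC(\SUF)$.

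Combining both directions establishes the incomparability for every $k\geq 1$. No substantive obstacle is expected; the real work was already carried out in the preceding lemma and in~\cite{DasManTru12b}, and the present proof is purely a combination step using monotonicity together with the hierarchy from Theorem~\ref{th-struct-hier}.
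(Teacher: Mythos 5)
Your proposal is correct and follows exactly the paper's own argument: the direction $\cIC(\SUF)\not\subseteq\cIC(\SLT_k)$ comes from the witness $K_k$ of the preceding lemma, and the direction $\cIC(\SLT_k)\not\subseteq\cIC(\SUF)$ comes from the known result $\cIC(\COMB)\setminus\cIC(\SUF)\neq\emptyset$ of~\cite{DasManTru12b} together with the inclusion $\cIC(\COMB)\subseteq\cIC(\SLT_k)$ obtained via Lemma~\ref{l-context-gramm-monoton}. No differences worth noting.
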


It is left open, whether the family $\cIC(\SUF)$ is also incomparable to the family $\cIC(\SLT)$ or
whether it is a proper subset (since we know already that~$\cIC(\SLT)\setminus\cIC(\SUF)\not=\emptyset$).

Now we investigate the relations of the families
$\cIC(\FIN)$, $\cIC(\NIL)$, and $\cIC(\DEF)$
to the
families $\cIC(\SLT_k)$ for $k\geq 1$
as well as the relation of the family $\cIC(\SLT)$ to the family $\cIC(\ORD)$.

\begin{lemma}\lab{l-ic-incomp+sltk-fin+nil+def}
The families $\cIC(\SLT_k)$ for $k\geq 1$ are incomparable to the families $\cIC(\FIN)$, $\cIC(\NIL)$, and $\cIC(\DEF)$.
\end{lemma}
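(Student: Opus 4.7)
The plan is to establish the six pairwise non-inclusions by reusing witness languages already supplied in Lemmas~\ref{l-ic-33} and~\ref{l-ic-34}, together with the monotonicity from Lemma~\ref{l-context-gramm-monoton}. Since $\FIN\subseteq\NIL\subseteq\DEF$, that monotonicity yields the chain $\cIC(\FIN)\subseteq\cIC(\NIL)\subseteq\cIC(\DEF)$, so it suffices to verify one non-inclusion in each direction, namely $\cIC(\SLT_k)\not\subseteq\cIC(\DEF)$ and $\cIC(\FIN)\not\subseteq\cIC(\SLT_k)$, for every $k\geq 1$.

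For $\cIC(\SLT_k)\not\subseteq\cIC(\DEF)$, I would take the language $L=\set{a^nb^mc^nd^m}{m\geq 1,\ n\geq 1}$ from Lemma~\ref{l-ic-34}, which is shown there to lie in $\cIC(\SLT_1)\setminus\cIC(\DEF)$. Because $\SLT_1\subseteq\SLT_k$ for every $k\geq 1$ (by the hierarchy recorded in Theorem~\ref{th-struct-hier}), Lemma~\ref{l-context-gramm-monoton} gives $\cIC(\SLT_1)\subseteq\cIC(\SLT_k)$, so $L\in\cIC(\SLT_k)\setminus\cIC(\DEF)$. Via the chain $\cIC(\FIN)\subseteq\cIC(\NIL)\subseteq\cIC(\DEF)$, the same $L$ simultaneously witnesses $\cIC(\SLT_k)\not\subseteq\cIC(\NIL)$ and $\cIC(\SLT_k)\not\subseteq\cIC(\FIN)$.

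For $\cIC(\FIN)\not\subseteq\cIC(\SLT_k)$, I would apply Lemma~\ref{l-ic-33} with the index choice $n=k+1\geq 2$: the resulting language $L_{k+1}$ belongs to $\cIC(\SLT_{k+1})\cap\cIC(\FIN)$ but not to $\cIC(\SLT_{(k+1)-1})=\cIC(\SLT_k)$, so $L_{k+1}\in\cIC(\FIN)\setminus\cIC(\SLT_k)$. Propagating this along $\cIC(\FIN)\subseteq\cIC(\NIL)\subseteq\cIC(\DEF)$ yields $\cIC(\NIL)\not\subseteq\cIC(\SLT_k)$ and $\cIC(\DEF)\not\subseteq\cIC(\SLT_k)$ as well, completing all six non-inclusions.

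The argument reduces entirely to invoking the two previous lemmas together with monotonicity, so no genuinely new construction is required; the only place where care is needed — and the sole plausible source of an off-by-one slip — is the index shift in the application of Lemma~\ref{l-ic-33}, whose parameter $n$ describes strictly locally $n$-testability while the present claim is parameterised by $k$, so the correct choice is $n=k+1$ to obtain a witness separating $\cIC(\FIN)$ from $\cIC(\SLT_k)$ uniformly in $k\geq 1$.
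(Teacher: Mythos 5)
Your proposal is correct and follows essentially the same route as the paper: both reduce the six non-inclusions via the chains $\cIC(\FIN)\subseteq\cIC(\NIL)\subseteq\cIC(\DEF)$ and $\cIC(\SLT_1)\subseteq\cIC(\SLT_k)$ to the two witnesses from Lemmas~\ref{l-ic-34} and~\ref{l-ic-33}. The index shift you flag ($n=k+1$ in Lemma~\ref{l-ic-33}) is handled identically in the paper, which states the witness as $L_n\in\cIC(\FIN)\setminus\cIC(\SLT_{n-1})$ for $n\geq 2$.
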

\begin{proof}
Due to the inclusion relations, it suffices to show that there are languages
\[L_0\in\cIC(\SLT_1)\setminus\cIC(\DEF) \quad\mbox{ and }\quad
L_n\in\cIC(\FIN)\setminus\cIC(\SLT_n) \mbox{ for $n\geq 1$}.\]
From Lemma~\ref{l-ic-34}, we know for $L_0=\Set{a^nb^mc^nd^m}{m\geq 1,\ n\geq 1}$ that
$L_0\in\cIC(\SLT_1)\setminus\cIC(\DEF)$.
From Lemma~\ref{l-ic-33}, we
know $L_n=\set{a^mb^{2n}c^m}{m\geq n}\cup\sets{a^{n-1}b^nc^{n-1}}\in\cIC(\FIN)\setminus\cIC(\SLT_{n-1})$ for~$n\geq 2$.
\end{proof}

From \cite{McNPap71} and by Lemma~\ref{l-context-gramm-monoton}, we know the
inclusion $\cIC(\SLT)\subseteq\cIC(\NC)$; from \cite{Tru21-fi}, we have the relation $\cIC(\ORD)\subseteq\cIC(\NC)$.
Here, we have shown with Lemma~\ref{l-ic-35} that there is a language in the family $\cIC(\ORD)$ which does not belong to
the family~$\cIC(\SLT)$. The question whether the family~$\cIC(\SLT)$ is a proper subset of the family~$\cIC(\ORD)$ or
whether these two families are incomparable is left open.

Summarizing, we have the following result.

\begin{figure}[htb]
\centerline{%
\scalebox{0.75}{\begin{tikzpicture}[node distance=18mm and 34mm,on grid=true,
background rectangle/.style=
{
draw=black!80,
rounded corners=1ex},
show background rectangle]
\node (REG) {$\cIC(\REG)\stackrel{\cite{DasManTru12b}}{=}\cIC(\UF)$};
\node (PS) [below=of REG] {$\cIC(\PS)$};
\node (NC) [below=of PS] {$\cIC(\NC)$};
\node (d1) [below=of NC] {};
\node (d2) [below=of d1] {};
\node (ORD) [below=of d2] {$\cIC(\ORD)$};
\node (DEF) [below=of ORD] {$\cIC(\DEF)$};
\node (COMB) [below=of DEF] {$\cIC(\COMB)$};
\node (NIL) [left=of COMB] {$\cIC(\NIL)$};
\node (MON) [below=of COMB] {$\cIC(\MON)$};
\node (FIN) [left=of MON] {$\cIC(\FIN)$};
\node (SLT1) [right=of DEF] {$\cIC(\SLT_1)$};
\node (SLT2) [above=of SLT1] {$\cIC(\SLT_2)$};
\node (SLTn) [above=of SLT2] {$\vdots$};
\node (SLT) [above=of SLTn] {$\cIC(\SLT)$};
\node (SUF) [right=of SLT2] {$\cIC(\SUF)$};
\node (COMM) [right=of SUF] {$\cIC(\COMM)$};
\node (CIRC) [above=of COMM] {$\cIC(\CIRC)$};
\draw[hier] (FIN) to node[edgeLabel]{\cite{DasManTru12b}} (NIL);
\draw[hier] (MON) to node[edgeLabel]{\cite{DasManTru12b}} (COMB);
\draw[hier] (MON) to node[edgeLabel]{\cite{DasManTru12b}} (NIL);
\draw[hier] (MON) [bend right=25]to node[edgeLabel]{\cite{DasManTru12b}} (SUF);
\draw[hier] (NIL) to node[edgeLabel,near start]{\cite{DasManTru12b}} (DEF);
\draw[hier] (MON) [bend right=30]to node[edgeLabel]{\cite{DasManTru12b}} (COMM);
\draw[hier] (COMB) to node[edgeLabel]{\cite{DasManTru12b}} (DEF);
\draw[hiero] (ORD) to (NC);
\draw[hier] (DEF) to node[edgeLabel]{\cite{Tru14-ncma}} (ORD);
\draw[hier] (NC) to node[edgeLabel]{\cite{Tru21-fi}} (PS);
\draw[hier] (PS) to node[edgeLabel]{\cite{Tru21-fi}} (REG);
\draw[hier] (COMM) to node[edgeLabel]{\cite{DasManTru12b}} (CIRC);
\draw[hier] (CIRC) [bend right=30]to node[edgeLabel]{\cite{DasManTru12b}} (REG);
\draw[hier] (SUF) [bend right=25]to node[edgeLabel]{\cite{Tru21-fi}} (PS);
\draw[hier] (COMB) to node[edgeLabel]{\ref{l-ic-th1}} (SLT1);
\draw[hier] (SLT1) to node[edgeLabel]{\ref{l-ic-th1}} (SLT2);
\draw[hier] (SLT2) to node[edgeLabel]{\ref{l-ic-th1}} (SLTn);
\draw[hier] (SLTn) to node[edgeLabel]{\ref{l-ic-th1}} (SLT);
\draw[hier] (SLT) to node[edgeLabel]{\ref{l-ic-th1}} (NC);
\draw[hier] (DEF) [bend right=5]to node[edgeLabel]{\ref{l-ic-th1}} (SLT);
\draw[hier] (SLT1) to node[edgeLabel]{\ref{l-ic-slt1-ord}} (ORD);
\end{tikzpicture}}}
\caption{Hierarchy of language families by internal contextual grammars with
selection languages defined by structural properties. An edge label
refers to the paper where the respective inclusion is proved.}\label{ic-fig-struct}
\end{figure}
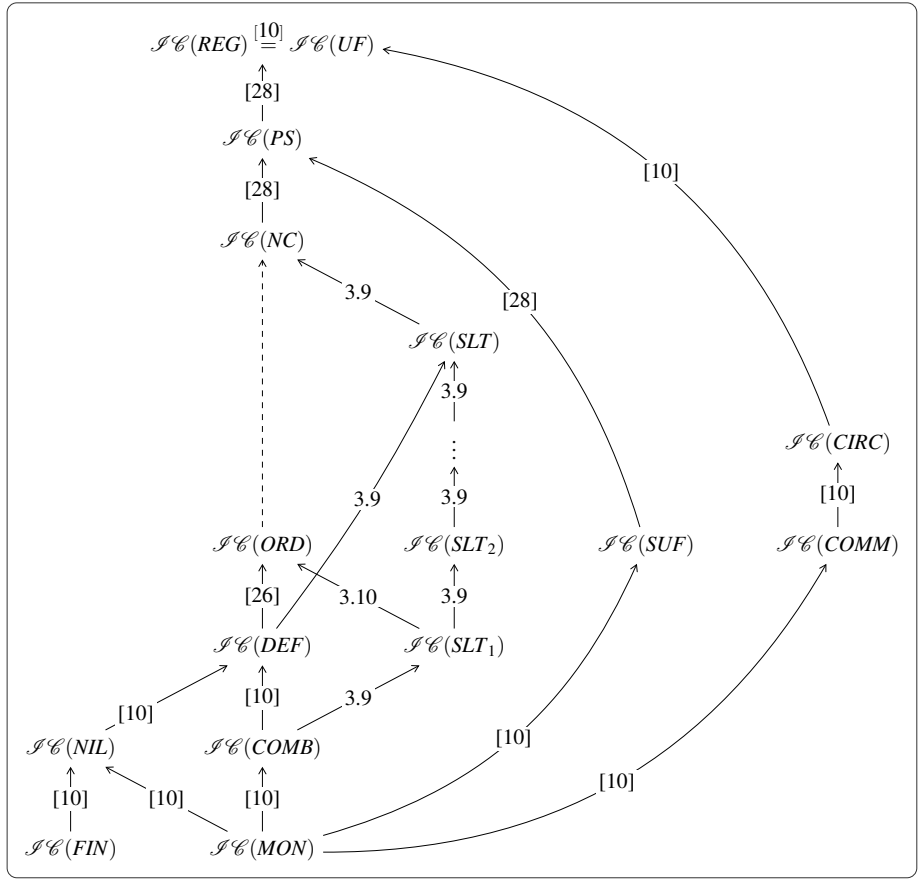

\begin{theorem}\lab{th-ic-struct-hier}
The inclusion relations presented in Figure~\ref{ic-fig-struct} hold.
An arrow from an entry~$X$ to an entry~$Y$ depicts the proper inclusion $X\subset Y$; the dashed arrow
from~$\cIC(\ORD)$ to~$\cIC(\NC)$ indicates that it is not known so far whether the inclusion 
is proper or whether equality holds.
If two families are not connected by a directed path, then they are incomparable with the exception of the
family~$\cIC(\SUF)$ and the families~$\cIC(\ORD)$, $\cIC(\NC)$, and $\cIC(\SLT)$ where
$\cIC(\ORD)\not\subseteq \cIC(\SUF)$, $\cIC(\NC)\not\subseteq \cIC(\SUF)$, and $\cIC(\SLT)\not\subseteq \cIC(\SUF)$ hold,
and with exception of the family $\cIC(\ORD)$ and the families $\cIC(\SLT_k)$ for $k\geq 2$ and~$\cIC(\SLT)$, where
$\cIC(\ORD)\not\subseteq \cIC(\SLT_k)$ for $k\geq 2$ and~$\cIC(\ORD)\not\subseteq \cIC(\SLT)$ hold.
\end{theorem}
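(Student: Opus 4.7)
The plan is to verify Theorem~\ref{th-ic-struct-hier} by collecting the individual inclusions and separations already established in this subsection and in the cited literature and tracing them along the edges and missing edges of Figure~\ref{ic-fig-struct}. Since every relation required is either proved in a prior lemma of this subsection or taken from a reference, the argument is essentially book-keeping, and I would spell it out along the following lines.

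For the solid arrows depicting proper inclusions, I would first dispatch the classical part of the figure, involving only the families $\cIC(\FIN)$, $\cIC(\NIL)$, $\cIC(\MON)$, $\cIC(\COMB)$, $\cIC(\DEF)$, $\cIC(\ORD)$, $\cIC(\SUF)$, $\cIC(\COMM)$, $\cIC(\CIRC)$, $\cIC(\NC)$, $\cIC(\PS)$, $\cIC(\REG)$, $\cIC(\UF)$, by invoking \cite{DasManTru12b,Tru14-ncma,Tru21-fi} as marked on the figure. The new inclusions involving strictly locally testable selection languages, namely $\cIC(\COMB)\subset\cIC(\SLT_1)\subset\cIC(\SLT_2)\subset\cdots\subset\cIC(\SLT)\subset\cIC(\NC)$ and $\cIC(\DEF)\subset\cIC(\SLT)$, are exactly Theorem~\ref{l-ic-th1}, while $\cIC(\SLT_1)\subset\cIC(\ORD)$ is Lemma~\ref{l-ic-slt1-ord}. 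The dashed arrow from $\cIC(\ORD)$ to $\cIC(\NC)$ records that the inclusion is known (from \cite{Tru21-fi}) but its properness is still open.

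For the missing edges I would group the unordered pairs into three blocks. Pairs lying entirely inside the classical subhierarchy are taken from \cite{DasManTru12b,Tru14-ncma,Tru21-fi}. Pairs of the form $(\cIC(\SLT_k),\cIC(\FIN))$, $(\cIC(\SLT_k),\cIC(\NIL))$, and $(\cIC(\SLT_k),\cIC(\DEF))$ are precisely Lemma~\ref{l-ic-incomp+sltk-fin+nil+def}. Pairs of $\cIC(\SLT_k)$ or $\cIC(\SLT)$ with $\cIC(\COMM)$ or $\cIC(\CIRC)$ follow by sandwiching via $\cIC(\COMB)\subseteq\cIC(\SLT_k)\subseteq\cIC(\SLT)\subseteq\cIC(\NC)$ together with the known incomparabilities of $\cIC(\COMM)$ and $\cIC(\CIRC)$ with both endpoints $\cIC(\COMB)$ and $\cIC(\NC)$, as explained in the paragraph preceding Lemma~\ref{l-ic-incomp+sltk-suf}. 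The pairs $(\cIC(\SLT_k),\cIC(\SUF))$ are Lemma~\ref{l-ic-incomp+sltk-suf}.

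The exceptional pairs listed in the statement are exactly those for which only one-sided separation is currently available: $\cIC(\ORD)$, $\cIC(\NC)$, and $\cIC(\SLT)$ are each known not to be contained in $\cIC(\SUF)$, but the reverse direction is open; and $\cIC(\ORD)$ is known not to be contained in $\cIC(\SLT_k)$ for $k\ge 2$ or in $\cIC(\SLT)$, both witnessed by the language of Lemma~\ref{l-ic-35}, while the reverse inclusions remain open. The main obstacle is not a single deep argument but the careful enumeration of all unordered pairs in the figure, checking that every missing edge either falls into one of the three groups above or is one of the listed exceptional cases; once that enumeration is completed, each case reduces immediately to a cited result.
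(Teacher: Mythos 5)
Your proposal is correct and matches the paper's approach: the paper gives no explicit proof of this summary theorem, relying exactly on the preceding results (Theorem~\ref{l-ic-th1}, Lemmas~\ref{l-ic-slt1-ord}, \ref{l-ic-incomp+sltk-suf}, \ref{l-ic-incomp+sltk-fin+nil+def}, the sandwiching argument for $\cIC(\COMM)$ and $\cIC(\CIRC)$, and the cited literature for the classical subhierarchy), which is precisely the book-keeping you describe. Your identification of the exceptional one-sided pairs also agrees with the paper's discussion.
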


\section{Conclusions}

The inclusion relations obtained for the families of languages generated by external or internal contextual grammars
are in most cases the same as for the families where the selection languages are taken from.

For further research, the open questions already mentioned should be considered:
What is the relation between the families~$\cEC(\SLT)$ and~$\cEC(\ORD)$ as well as between
the families~$\cIC(\SLT)$ and~$\cIC(\ORD)$,
especially, is there a language in the set $\cEC(\SLT_2)\setminus\cEC(ORD)$ or a language
in the set~$\cIC(\SLT_2)\setminus\cIC(ORD)$?
Is the family $\cIC(\SUF)$ incomparable to the family $\cIC(\SLT)$ or is it a proper subset?
Additionally, it remains to investigate the relations of the family~$\cIC(\SUF)$ to the
families~$\cIC(\ORD)$ and~$\cIC(\NC)$.

In \cite{Tru21-fi}, two independent
hierarchies have been obtained for each type of contextual grammars, one based on selection languages defined by
structural properties (as considered in this present paper), the other based on resources (number of non-terminal
symbols, production rules, or states). These hierarchies should be merged.

The families of languages which are locally ($k$-)testable (not necessarily in the strict sense) are the Boolean closure
of the families in the strict sense. For contextual grammars where the selection languages are intersections or unions
of strictly locally ($k$-)testable languages, nothing has to be done since the classes $\SLT_k$ for $k\geq 1$ and $\SLT$
are closed under intersection and, for union in a selection pair~$(S_1\cup S_2,C)$, one can take several selection
pairs $(S_1,C)$, $(S_2,C)$ instead. It remains to investigate the impact of locally ($k$-)testable selection languages
which are the complement of a strictly \hbox{locally~($k$-)testable} language.

Additionally, other subfamilies of regular languages could be taken into consideration. Recently, by J\"urgen Dassow,
external contextual grammars have been investigated where the selection languages are ideals or codes~(\cite{Das17,Das21}).
This reseach could be extended to internal contextual grammars with ideals or codes as selection languages.

\nocite{*}
\bibliographystyle{eptcs}
\bibliography{context_gram_subreg}

\end{document}